\providecommand{\keywords}[1]{\textrm{KEY WORDS: } #1}
\newcommand{\bs}{\bm{s}}
\def\bs{{\boldsymbol{s}}}
\def\pr{{\rm Pr}}
\def\iid{{\ {\buildrel \rm{iid}\over \sim}\ }}
\def\calS{{\mathcal{S}}}
\def\E{{\rm E}}
\DeclareMathSymbol{\Real}{\mathbin}{AMSb}{"52}
\DeclareMathSymbol{\Natural}{\mathbin}{AMSb}{"4E}
\theoremstyle{plain}
\newtheorem{theorem}{Theorem}[section]
\newtheorem{proposition}[theorem]{Proposition}
\title{A Hierarchical Max-Infinitely Divisible Spatial Model for Extreme Precipitation}
\date{}
\author{Gregory P. Bopp\thanks{Department of Statistics, Pennsylvania State University, 326 Thomas Building, University Park, PA 16802, U.S.A. Emails: gxb951@psu.edu}, \, Benjamin A. Shaby\thanks{Department of Statistics, Colorado State University, 211 Statistics Building, Fort Collins, CO 80523.  Email: bshaby@colostate.edu} \, and Rapha\"{e}l Huser\thanks{Computer, Electrical and Mathematical Sciences and Engineering (CEMSE) Division, King Abdullah University of Science and Technology (KAUST), Thuwal 23955-6900, Saudi Arabia. Email: raphael.huser@kaust.edu.sa}}
\begin{document}
\maketitle
\begin{abstract}
Understanding the spatial extent of extreme precipitation is necessary for determining flood risk and adequately designing infrastructure (e.g., stormwater pipes) to withstand such hazards. While environmental phenomena typically exhibit weakening spatial dependence at increasingly extreme levels, limiting max-stable process models for block maxima have a rigid dependence structure that does not capture this type of behavior. We propose a flexible Bayesian model from a broader family of (conditionally) max-infinitely divisible processes that allows for weakening spatial dependence at increasingly extreme levels, and due to a hierarchical representation of the likelihood in terms of random effects, our inference approach scales to large datasets. Therefore, our model not only has a flexible dependence structure, but it also allows for fast, fully Bayesian inference, prediction and conditional simulation in high dimensions. The proposed model is constructed using flexible random basis functions that are estimated from the data, allowing for straightforward inspection of the predominant spatial patterns of extremes. In addition, the described process possesses (conditional) max-stability as a special case, making inference on the tail dependence class possible. We apply our model to extreme precipitation in North-Eastern America, and show that the proposed model adequately captures the extremal behavior of the data. Interestingly, we find that the principal modes of spatial variation estimated from our model resemble observed patterns in extreme precipitation events occurring along the coast (e.g., with localized tropical cyclones and convective storms) and mountain range borders. Our model, which can easily be adapted to other types of environmental datasets, is therefore useful to identify extreme weather patterns and regions at risk.
\end{abstract}

\keywords{max-infinitely divisible process; max-stable process;
sub-asymptotic extremes; block maxima.}

\section{INTRODUCTION} % (fold)
\label{sec:introduction}

The risk of precipitation-induced flooding (pluvial flooding) is strongly determined by the spatial extent of severe storms, and therefore, there is a need to adequately describe the spatial dependence properties of extreme precipitation. With this goal in mind, we propose a scalable model for spatial extremes that relaxes the rigid dependence structure of asymptotic max-stable models, characterizes the main modes of spatial variability using interpretable spatial factors, and allows for easy prediction at unobserved locations. The areal aspect of extreme precipitation plays a role in flood risk assessment. Precipitation falling over a single drainage basin flows into a common outlet, the aggregate effects of which can be devastating in large volumes. In 2006, heavy precipitation over the Susquehanna River basin in New York and Pennsylvania caused record high discharges along the Susquehanna River and flooding in the region, ultimately leading to federal-level disaster declarations and disaster-recovery assistance from the US Federal Emergency Management Agency (FEMA) in excess of \$227 million \citep{Suro09}. 

The last decade has seen a considerable amount of research on the spatial dependence modeling of extremes, in part because of the hazard that extreme weather events pose to human life and property. For recent reviews, see \cite{Davison12, Davison13, Davison19} and \cite{Davison15}. The classical geostatistical Gaussian process models that are ideal for modeling the bulk of a distribution have weak tail-dependence and do not enforce the specific type of positive dependence structure inherent to extremes. Two classes of models, max-stable processes \citep{deHaan07} and generalized Pareto processes \citep{Ferreira14,Thibaud15}, have proven to be useful tools for the modeling of spatial extremes. Max-stable process models are infinite-dimensional generalizations of the limiting models for componentwise maxima. They are asymptotically justified models for pointwise maxima over an infinite collection of independent processes after suitable renormalization, a property which has made them prime candidates for the modeling of spatial extremes. In practice, maxima are taken over large, but finite blocks (e.g., months, years). An approximation error is incurred when applying limiting models to pointwise maxima over finite blocks, and the degree of this error will depend on the rate of convergence of the modeled process as the block size grows. Furthermore, the approximation error is more pronounced when the observed process exhibits weakening spatial dependence at increasingly high quantiles, as the spatial dependence of limiting max-stable processes is the same across all levels of the distribution, and hence would overestimate the level of dependence in the data. For more discussion, see, e.g., \cite{Wadsworth12}. Empirical evidence has shown that environmental processes often exhibit weakening spatial dependence at more extreme levels, which has led some to consider non-limiting models for flexible tail dependence modeling \citep{Morris16,Huser17,Huser18, HuserWadsworth19}. In particular, \cite{Morris16} use a random partition of their spatial domain and locally defined, asymptotically dependent skew-t processes to induce long-range asymptotic independence but short-range asymptotic dependence. 

In this paper, we aim to extend a class of max-stable models in order to flexibly capture spatial dependence characteristics for sub-asymptotic block maxima data, while still retaining the positive dependence structure inherent to distributions for maxima. The general class of models that we consider, which nests the class of max-stable models, are known as max-infinitely divisible (max-id) processes \citep[][Chapter 5]{Resnick87}. Suppose a random vector $\mathbf{X}$ has joint distribution $F_X$, then the distribution of maxima of $n$ independent and identically distributed (i.i.d.) replicates $\mathbf{X}_1,\ldots, \mathbf{X}_n$, taken componentwise, has distribution function $F_X^n$. The max-id property applies to the converse statement. Suppose that $\mathbf{Z}$ is a random vector of componentwise maxima, composed from a collection of $n$ i.i.d.\ vectors. Then if $\mathbf{Z}$ has distribution function $G$, there exists some root distribution $F$ such that $G(\mathbf{z}) = F^n(\mathbf{z})$, or equivalently such that $G^{1/n}(\mathbf{z}) = F(\mathbf{z})$. By continuous extension of the relation $G^{q/r} = F$ for $q,r \in \mathbb{N}$, we say that a distribution $G$ is max-id if and only if $G^s$ is a valid distribution for all real $s >0$. This is always the case for univariate distributions, but may not necessarily be so for multivariate distributions. Informally, max-id distributions are those which arise from taking componentwise maxima of i.i.d.\ random vectors and are therefore an appropriate class to constrain ourselves to if the goal is to model componentwise maxima. By slight abuse of language, we say that a spatial process is max-id if all its finite-dimensional distributions are max-id. Necessary and sufficient conditions for max-infinitely divisibility of a distribution function in $\mathbb{R}^2$ were first given by \cite{Balkema77}. More recently, mixing conditions for stationary max-id processes were explored by \cite{Kabluchko10}, and minimality of their spectral representations were described in \cite{Kabluchko16}.

Unlike limiting max-stable process models, which have a rigid spatial dependence structure, sub-families of the broader class of max-id processes do not impose such constraints and can accommodate different spatial dependence characteristics across various levels of a distribution \citetext{see, e.g. \citealp{Padoan13}, \citealp{Huser18}}. It is the lack of this feature that can cause max-stable processes to fit poorly, as many  processes of interest may exhibit spatial dependence at extreme but finite levels. Extrapolation of max-stable fits to higher quantiles in this scenario can cause overestimation of the risk of concurrent extremes \citep{Davison13}. Furthermore, the challenge of performing conditional simulation from max-stable models given observed values at many locations is a limiting factor for their use in practice \citep{Dombry13}. The Bayesian model that we develop in the remainder of the paper permits a conditional, hierarchical representation in terms of random effects that facilitates fast conditional simulation, which is useful for prediction at unobserved locations, and for handling missing values.

\section{HIERARCHICAL CONSTRUCTION OF SPATIAL MAX-ID MODELS}
\label{sec:hierarchical_construction_of_spatial_maxid}
% section hierarchical_construction_of_spatial_maxid (end)

\subsection{Max-Stable Reich and Shaby (2012) Model} % Getting compilation errors when using the actual citation here
\label{sub:Reich_Shaby}
Our proposed approach is an extension of the Bayesian hierarchical model developed by \citet{Reich12}, which we review here. The \cite{Reich12} model possesses the max-stability property while being tractable in high-dimensions due to its conditional representation in terms of positive-stable variables \citetext{see also \citealp{Fougeres09} and \citealp{Stephenson09}}. Let $\alpha\in(0,1)$ and consider a set of independent $\alpha$-stable random variables $A_1,\ldots,A_L\iid {\rm PS}(\alpha)$, where generically the Laplace transform of $A \sim {\rm PS}(\alpha)$ has the form: $\E\{\exp(-sA)\}=\exp(-s^\alpha),\, s \geq 0$. Then we construct the spatial process $Z(\bs)$ as the product of two independent processes,
\begin{equation}
Z(\bs)=\varepsilon(\bs)Y(\bs),
\label{eq:z_process}
\end{equation}
where $\varepsilon(\bs)$ is a white noise process (i.e., an everywhere-independent multiplicative nugget effect) with $(1/\alpha)$-Fr\'echet marginals, $\pr\{\varepsilon(\bs)\leq z\}=\exp(-z^{-1/\alpha})$, and $Y(\bs)$ is a spatially dependent process defined as an $L^p$-norm (for $p = 1/\alpha$) of scaled, spatially-varying basis functions $K_l(\bs)\geq0$, $l = 1,\ldots, L$:
\begin{equation}
Y(\bs)= \left\{\sum_{l=1}^L A_l K_l(\bs)^{1/\alpha}\right\}^{\alpha}.
\label{eq:residual_dep_proc}	
\end{equation}

The white noise process $\varepsilon(\bs)$ functions as a nugget effect, and accounts for measurement error occurring independently of the underlying process of interest. For small $\alpha$, the contribution of $Y(\bs)$ dominates that of the nugget effect, and vice-versa for large $\alpha$.

\cite{Reich12} used fixed, deterministic spatial basis functions. In other words, they assumed a Dirac prior on the space of valid basis functions, based on the following construction: let $\boldsymbol{v}_1,\ldots,\boldsymbol{v}_L\in\calS\in\Real^p$ be a collection of spatial knots over our spatial domain of interest $\calS$, and $K_l(\bs)$, $l=1,\ldots,L$, be Gaussian densities centered at each knot $\boldsymbol{v}_l$, normalized such that $\sum_{l=1}^{L}K_l(\bs)=1$ for all $\bs\in\calS$. The Gaussian density basis functions may be replaced with normalized functions from a much broader class while still giving a valid construction for $Y(\bs)$ in \eqref{eq:residual_dep_proc}. A more flexible prior for the kernels $K_l(\bs)$, $l=1,\ldots,L$, is discussed in Section~\ref{sub:flexible_gaussian_process_factor_model}.

The process $\{Z(\bs), \bs \in \mathcal{S}\}$ has finite-dimensional distributions
\begin{equation}
\pr\{Z(\bs_1)\leq z_1,\ldots,Z(\bs_D)\leq z_D\}
=\exp\left(-\sum_{l=1}^L\left[\sum_{j=1}^D \left\{z_j/K_l(\bs_j)\right\}^{-1/\alpha} \right]^\alpha\right),\quad z_1,\ldots,z_D>0\label{jointdistrReichShaby}
\end{equation}
\citetext{see \citealp{Tawn90}}, which follows from the Laplace transform of an $\alpha$-stable variable. From \eqref{jointdistrReichShaby} and the sum-to-one constraint, the marginal distributions are unit Fr\'echet, i.e., for all $\bs\in\calS$,
\begin{equation*}
\pr\{Z(\bs)\leq z\}=\exp\left(-\sum_{l=1}^L\left[\left\{z/K_l(\bs)\right\}^{-1/\alpha} \right]^\alpha\right)=\exp\left\{-z^{-1}\sum_{l=1}^LK_l(\bs)\right\}=\exp\left(-z^{-1}\right), \quad z>0.
\end{equation*}
Max-stability follows from \eqref{jointdistrReichShaby} by checking that
\begin{align}
\label{maxstable}
\pr\{Z(\bs_1)\leq nz_1,\ldots,Z(\bs_D)\leq nz_D\}^n&=\pr\{Z(\bs_1)\leq z_1,\ldots,Z(\bs_D)\leq z_D\}.
\end{align}
The max-stability property of $Z(\bs)$ makes it suitable for modeling spatial extremes in scenarios of strong, non-vanishing upper tail dependence. In Section~\ref{sub:subasymptotic_modeling_maxid}, we propose a more general max-id model, which can better cope with weakening tail dependence.

Inference may be efficiently performed by taking advantage of the inherent hierarchical structure of the \cite{Reich12} model, noticing that the data are independent conditional on the latent variables $\{A_l\}_{l=1}^L$, and may be written in terms of the Fr\'{e}chet distribution with scale parameter $Y(\bs)>0$ and shape parameter $1/\alpha>0$:
\begin{equation}
Z(\bs)|A_1, \ldots, A_L \overset{indep}{\sim} \text{Fr\'{e}chet}(Y(\bs),1/\alpha),\\
\end{equation}
for all $\bs \in \mathcal{S}$; that is, $\pr\{Z(\bs)\leq z\mid A_1,\ldots,A_L\}=\exp[-\{z/Y(\bs)\}^{-1/\alpha}]$, $z>0$.

\subsection{Sub-Asymptotic Modeling Based on a Max-Infinitely Divisible Process}
\label{sub:subasymptotic_modeling_maxid}
Despite the appealing properties of the \cite{Reich12} model, its deterministic basis functions and its max-stability make it fairly rigid in practice. Max-id processes are natural, flexible, sub-asymptotic models, that extend the class of max-stable processes while still possessing desirable properties reflecting the specific positive dependence structure of maxima. From \eqref{maxstable}, we can see that max-stable processes are always max-id. Therefore, the former form a smaller subclass within the latter. 

The tail dependence class strongly determines how the probability of joint exceedances of a high threshold extrapolates to extreme quantiles. A random vector $(X_1, X_2)^\top$ with marginal distributions $F_1$ and $F_2$ is said to be asymptotically independent if $\pr\{F_1(X_1)> u\mid F_2(X_2)> u\}\to0$ as $u \rightarrow 1$, and asymptotically dependent otherwise \citep{Coles99}. We say that a spatial process \{$X(\bs), \bs \in \mathcal{S}\}$ is asymptotically independent if $X(\bs_1)$ and $X(\bs_2)$ are asymptotically independent for all $\bs_1, \bs_2\in \mathcal{S}$, $\bs_1 \neq \bs_2$. Max-stable processes are always asymptotically dependent (except in the case of complete independence) and, therefore, they lack flexibility to adequately capture the tail behavior of asymptotically independent data. In this section, we propose an asymptotically independent max-id model that possesses the max-stable \cite{Reich12} model on the boundary of its parameter space. Dependence properties are further detailed in Section~\ref{sub:dependence_properties}.

To extend the \cite{Reich12} model to a more flexible max-id formulation, we can change the distribution of the underlying random basis coefficients $\{A_l\}_{l=1}^L$. The heavy-tailedness of the ${\rm PS}(\alpha)$ distribution yields asymptotic dependence and, by construction, max-stability. To achieve asymptotic independence while staying within the class of max-id processes, we can consider a lighter-tailed, exponentially tilted, positive-stable distribution,
\begin{align}\label{ExtendedPS}
A_1,\ldots,A_L\iid {\rm H}(\alpha,\delta,\theta),\qquad \alpha\in(0,1),\delta>0,\theta\geq0,
\end{align}
which was first introduced by \citet{Hougaard86} and further studied by \cite{Crowder89}, and has Laplace transform
\begin{align}\label{LaplaceExtendedPS}
\E\left\{\exp\left(-sX\right)\right\}&=\exp\left[-{\delta\over\alpha}\left\{(\theta+s)^\alpha - \theta^\alpha\right\}\right],\qquad X\sim {\rm H}(\alpha,\delta,\theta).
\end{align}
 Denote the ${\rm PS}(\alpha)$ density by $f_{{\rm PS}}(x)$. The ${\rm H}(\alpha, \delta, \theta)$ density $f_{\rm H}$ may be expressed in terms of the positive-stable density $f_{{\rm PS}}$ as
\begin{equation}
f_{{\rm H}}(x) = {f_{{\rm PS}}\{x(\alpha/\delta)^{1/\alpha}\}({\alpha/\delta})^{1/\alpha} \exp(-\theta x)\over\exp(\delta \theta^\alpha/\alpha)}, \qquad x>0,
\label{eq:hougaard_den}
\end{equation}
for $\alpha \in (0,1)$, $\theta \geq 0$, and $\delta > 0$ \citep{Hougaard86}. An efficient algorithm for simulating from ${\rm H}(\alpha, \delta, \theta)$ is given by \cite{Devroye09}. A simple rejection sampler for the case when $\theta$ is not large is given in the Supplementary Material. When $\delta=\alpha$ and $\theta=0$, we recover the positive-stable distribution ${\rm PS}(\alpha)\equiv{\rm H}(\alpha,\alpha,0)$. The parameter $\alpha$ controls the tail decay, with smaller values of $\alpha$ corresponding to heavier-tailed distributions. Moreover, the density becomes increasingly concentrated around one as $\alpha\to1$. When $\theta >0$, the gamma distribution with shape $\delta$ and rate $\theta$ is obtained as $\alpha\to0$.  

Upon reparameterization in terms of $\alpha^\star=\alpha$, $\delta^\star=(\delta/\alpha)^{1/\alpha}$ and $\theta^\star=(\delta/\alpha)^{1/\alpha}\theta$, we see from \eqref{eq:hougaard_den} that $\delta^\star=(\delta/\alpha)^{1/\alpha}$ is a scale parameter, which does not affect the dependence structure of our new model. Therefore, in the remainder of this paper, we set $\delta = \alpha$ (i.e., $\delta^\star=1$) and use ${\rm H}(\alpha,\alpha,\theta)$ throughout without any loss in flexibility.

When $\delta = \alpha$ and $\theta >0$, $f_{\rm H}$ is an exponentially tilted form of $f_{{\rm PS}}$, where the parameter $\theta$ has the effect of exponentially tapering the tail of $f_{{\rm PS}}$ at rate $\theta$. Other extensions of the positive-stable distribution may also be interesting avenues for future research (e.g., polynomial tilting \citep{Devroye09}). However, our choice of \eqref{ExtendedPS} preserves the simplicity of the model while introducing a single parameter, the exponential tilting parameter $\theta$, that is directly connected to the dependence properties of the resulting $Z(\bs)$ process, while allowing for inference that is computationally tractable.
\begin{proposition}
Let $\{Z(\bs), \bs\in \mathcal{S}\}$ be defined as in \eqref{eq:z_process} with $A_1,\ldots,A_L\iid {\rm H}(\alpha,\alpha,\theta)$, $\alpha\in(0,1)$, $\theta\geq0$. Then, $Z(\bs)$ is max-id.
\end{proposition}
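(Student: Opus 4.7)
The plan is to verify the definition of max-infinite divisibility directly: compute the joint finite-dimensional CDF $G$ of $(Z(\bs_1),\ldots,Z(\bs_D))$ in closed form, then show $G^s$ is a valid CDF for every $s>0$. The hierarchical structure \eqref{eq:z_process}--\eqref{eq:residual_dep_proc} makes this clean, because conditional on $A_1,\ldots,A_L$ the values $Z(\bs_j)$ are independent Fr\'echet, so marginalisation over the $A_l$'s reduces to evaluating their Laplace transforms.

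Concretely, I would first use the conditional Fr\'echet CDF $\pr\{Z(\bs_j)\le z_j\mid A_1,\ldots,A_L\}=\exp\{-z_j^{-1/\alpha}\sum_l A_l K_l(\bs_j)^{1/\alpha}\}$, together with independence of the $A_l$ and the Laplace transform \eqref{LaplaceExtendedPS} with $\delta=\alpha$, to obtain
\begin{equation*}
G(\mathbf{z}) = \prod_{l=1}^L \E[\exp(-c_l A_l)] = \prod_{l=1}^L \exp[-\{(\theta+c_l)^\alpha - \theta^\alpha\}], \quad c_l := \sum_{j=1}^D z_j^{-1/\alpha} K_l(\bs_j)^{1/\alpha}.
\end{equation*}
Raising to the power $s$ simply multiplies the exponent by $s$, and the key observation is that by \eqref{LaplaceExtendedPS} the factor $\exp[-s\{(\theta+c_l)^\alpha-\theta^\alpha\}]$ is exactly the Laplace transform of $H(\alpha,s\alpha,\theta)$ evaluated at $c_l$: changing $\delta$ from $\alpha$ to $s\alpha$ rescales the Laplace exponent by $s$.

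Introducing then i.i.d.\ variables $A_l^{(s)}\sim H(\alpha,s\alpha,\theta)$, which are well-defined since $s\alpha>0$, and reversing the calculation above yields
\begin{equation*}
G^s(\mathbf{z}) = \prod_{l=1}^L \E[\exp(-c_l A_l^{(s)})] = \pr\{Z^{(s)}(\bs_1)\le z_1,\ldots,Z^{(s)}(\bs_D)\le z_D\},
\end{equation*}
where $Z^{(s)}(\bs)=\varepsilon(\bs)\{\sum_l A_l^{(s)} K_l(\bs)^{1/\alpha}\}^\alpha$ is a process of exactly the form \eqref{eq:z_process}--\eqref{eq:residual_dep_proc}. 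Thus $G^s$ is a valid CDF for every $s>0$, which is the definition of max-infinite divisibility; since the argument holds at every finite collection of sites, $Z$ is max-id as a process.

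I do not anticipate any genuine obstacle. The only step that requires care is the identification of $G^s$ with the CDF of a reparameterised model of the same form, and that in turn rests on the linear dependence of the Laplace exponent on $\delta$ combined with the fact that $\delta>0$ is preserved under $\delta\mapsto s\delta$ for every $s>0$.
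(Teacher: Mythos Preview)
Your argument is correct. You compute the finite-dimensional CDF exactly as the paper does, and then identify $G^s$ as the CDF of a process of the same form; the identification step is where you and the paper diverge slightly. You absorb the power $s$ into the $\delta$ parameter, using that the Laplace exponent in \eqref{LaplaceExtendedPS} is linear in $\delta$, so $A_l^{(s)}\sim H(\alpha,s\alpha,\theta)$ does the job for every real $s>0$. The paper instead keeps $\delta=\alpha$ and absorbs the power into a simultaneous rescaling of the argument and the tilting parameter, obtaining $G(z_1,\ldots,z_D;\alpha,\theta)^{1/n}=G(nz_1,\ldots,nz_D;\alpha,\theta/n^{1/\alpha})$. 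Your route is marginally cleaner in that it handles all $s>0$ at once without rescaling $\mathbf z$; the paper's route has the side benefit that, because the reparametrised model lies in the same one-parameter family $\{G(\cdot;\alpha,\theta):\theta\ge0\}$, one can read off immediately that max-stability holds if and only if $\theta=0$.
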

\begin{proof}
From \eqref{LaplaceExtendedPS}, the finite-dimensional distributions for $\{Z(\bs), \bs \in \mathcal{S}\}$ based on \eqref{ExtendedPS} are 
\begin{align}
\pr\{&Z(\bs_1)\leq z_1,\ldots,Z(\bs_D)\leq z_D\}=\pr\{\varepsilon(\bs_1)Y(\bs_1)\leq z_1,\ldots,\varepsilon(\bs_D)Y(\bs_D)\leq z_D\}\nonumber\\
&=\E\left(\pr\left[\varepsilon(\bs_1)\leq z_1\left\{\sum_{l=1}^L A_l K_l(\bs_1)^{1/\alpha}\right\}^{-\alpha},\ldots, \varepsilon(\bs_D)\leq z_D\left\{\sum_{l=1}^L A_l K_l(\bs_D)^{1/\alpha}\right\}^{-\alpha}\mid A_1, \ldots, A_L\right]\right)\nonumber\\
&=\E\left(\exp\left[-\sum_{j=1}^D z_j^{-1/\alpha} \sum_{l=1}^L A_l K_l(\bs_j)^{1/\alpha}\right]\right)\nonumber\\
&=\prod_{l=1}^L\E\left(\exp\left[-A_l \sum_{j=1}^D \left\{z_j/K_l(\bs_j)\right\}^{-1/\alpha} \right]\right)\nonumber\\
&=\exp\left(L\theta^\alpha-\sum_{l=1}^L\left[\theta + \sum_{j=1}^D \left\{z_j/K_l(\bs_j)\right\}^{-1/\alpha} \right]^\alpha\right).\label{eq:jointdistrMaxIDReichShaby}
\end{align}
As
\begin{equation*}
\pr\{Z(\bs_1)\leq z_1,\ldots,Z(\bs_D)\leq z_D\}^{1/n}=\exp\left\{L\left({\theta\over n^{1/\alpha}}\right)^\alpha-\sum_{l=1}^L\left[\left({\theta\over n^{1/\alpha}}\right) + \sum_{j=1}^D \left\{nz_j/K_l(\bs_j)\right\}^{-1/\alpha} \right]^\alpha\right\},
\end{equation*}
the finite-dimensional distributions, denoted $G(z_1,\ldots,z_D;\alpha,\theta)$, from this new process satisfy\\ $G(z_1,\ldots,z_D;\alpha,\theta)^{1/n}=G(nz_1,\ldots,nz_D;\alpha,\theta/n^{1/\alpha})$ for all $n\in\Natural$, and thus the process is max-id. This also confirms that the process is max-stable if and only if $\theta=0$. 
\end{proof}
In Section 2.3, we specify spatial priors for the basis functions, so Proposition 2.1 should be interpreted conditional on the basis functions.

Marginal distributions are no longer unit Fr\'echet when $\theta>0$; they may be expressed as
\begin{equation}
G_{\bs}(z)=\pr\{Z(\bs)\leq z\} =\exp\left(L\theta^\alpha-\sum_{l=1}^L\left[\theta + \left\{z/K_l(\bs)\right\}^{-1/\alpha} \right]^\alpha\right),\quad z>0.
\label{eq:maxid_margin}
\end{equation}

Bayesian and likelihood-based inference may be performed similarly as before, so this process enjoys the same computational benefits as the \cite{Reich12} model, while having the traditional max-stable \cite{Reich12} process as a special case on the boundary of the parameter space (i.e., when $\theta=0$). Note that unlike the \cite{Reich12} model, here the marginal distributions depend on the dependence parameters $\alpha$ and $\theta$, however, this is not a problem for inference as we adopt a copula-based approach, in which we separate the treatment of the marginal distributions and the dependence structure. Marginal modeling is described in greater detail in Section \ref{sub:marginal_modeling}. Finally, a spectral representation for the proposed max-id model is described in the Supplementary Material, which makes a link with the max-id models of \cite{Huser18}.

\subsection{Prior Specification for the Spatial Kernels Based on Flexible Log-Gaussian Process Factors} % (fold)
\label{sub:flexible_gaussian_process_factor_model}
The basis functions used in \citet{Reich12}, constructed from Gaussian densities, are radial functions, decaying symmetrically from their knot centers. While it is possible to approximate a wide range of extremal functions by considering a large collection of Gaussian density basis functions $K_1(\bs), \ldots, K_L(\bs)$ as in \eqref{eq:residual_dep_proc}, the resulting process is overly smooth and artificially non-stationary for fixed $L$. In this section, we propose an alternative prior for the basis functions, which allows for a parsimonious, yet flexible, stationary representation that can give insights into the predominant modes of spatial variability among of the underlying process.

More precisely, we extend the \cite{Reich12} model by replacing the Dirac prior on the Gaussian density basis functions with flexible log-Gaussian process priors, which more closely approximate the features of natural phenomena than radial basis functions. This choice of basis functions is analogous to the construction of the Brown-Resnick process \citep{Brown77, Kabluchko09}, which itself can be represented as the pointwise maximum over an infinite collection of scaled log-Gaussian processes. Let $\tilde{K}_l(\bs)$, $l = 1, \ldots, L-1$, be i.i.d.\ mean-zero stationary Gaussian processes, each with exponential covariance function, $C(h) = \delta^2\exp(-h/\rho), \, h \geq 0$, whose variance and range are $\delta_K^2 >0$ and $\rho_K>0$, respectively. We take the $L$th basis to be the constant function equal to the mean of the Gaussian process, i.e., $\tilde{K}_L(\bs) = 0$ for all $\bs\in\calS$. Fixing the $L$th term ensures that it is possible to recover the $\tilde{K}_l$ from the $K_l(\bs)$ terms, which is necessary for making posterior draws of $\tilde{K}_l(\bs)$ (see Supplementary Material). Other prior choices for the basis functions that may also be worth exploring include using a more general Mat\'{e}rn class of covariance functions or Gaussian processes with stationary increments and an unbounded variogram (i.e., fractional Brownian motions), akin to the Brown-Resnick process. Application of a fractional Brownian motion prior in this context would require a choice of origin for each basis function, which would increase the computational cost if one wanted to marginalize over that unknown origin, and so we do not pursue it here. To satisfy the sum-to-one constraint for each spatial location $\bs \in \calS$, we set
\begin{equation}
K_l(\bs) = \exp\left\{\tilde{K}_l(\bs)\right\}/\sum_{l = 1}^L \exp\left\{\tilde{K}_l(\bs)\right\}, \quad l = 1, \ldots, L.
\label{eq:ln_basis}
\end{equation}

The variance parameter $\delta_K^2$ controls the long-range spatial dependence of the max-id process $Z(\bs)$, with smaller values corresponding to stronger long-range dependence (see \cite{Davison12} for a similar discussion of geometric Gaussian processes). When $\delta_K^2$ is large, the difference in relative magnitudes of the unnormalized log-Gaussian processes at any given location $\bs$ is likely to be larger than when $\delta_K^2$ is small. Normalizing the basis functions when the difference in magnitudes is great gives way to more volatile fluctuations between dominating basis functions, and hence less long-range dependence. The Gaussian process range parameter $\rho_K$ governs the short-range dependence, now with larger values corresponding to stronger short-range dependence. Because the proposed basis functions provide greater flexibility in adapting to the data than the fixed Gaussian density basis, fewer basis functions are needed. In the data application presented in Section \ref{sec:application_to_extreme_precipitation}, we choose the number of basis functions using an out-of-sample log-score criterion. Increasing the number of basis functions allows for greater flexibility in capturing spatially dependent subregions that tend to have extreme events together at the cost of greater computational burden.

When the deterministic basis functions used by \cite{Reich12} are replaced with random ones, the max-stability (when $\theta = 0$) and max-infinite divisibility properties should be interpreted conditionally on the basis functions. Both the conditional and unconditional dependence properties are described in Section \ref{sub:dependence_properties}.

\subsection{Marginal Modeling and Realizations} % (fold)
\label{sub:marginal_modeling}
For marginal distribution modeling, we use the Generalized Extreme-Value (GEV) distribution, which is the asymptotic distribution for univariate block maxima. The ${\rm GEV}(\mu, \sigma, \xi)$ distribution function has the following form:
\begin{equation*}
G(z) = \begin{cases}
	\exp\left[-\exp\left\{-(z-\mu)/\sigma\right\}\right], &\xi = 0, \\
	\exp[-\{1 + \xi (z - \mu)/\sigma\}_+^{-1/\xi}], & \xi \neq 0,
\end{cases}
\end{equation*}
where $a_+=\max(0,a)$, for some location $\mu\in \mathbb{R}$, scale $\sigma > 0$, and shape $\xi \in \mathbb{R}$ parameters, with support $\{z\in\mathbb R: 1 + \xi (z-\mu)/\sigma) > 0\}$ when $\xi \neq 0$, and $\mathbb{R}$ when $\xi = 0$. Since monotone increasing transformations of the marginal distributions do not change the max-id or max-stable dependence structure, we allow for general GEV marginal distributions that are possibly different for each spatial location. In other words, we set $\tilde{Z}(\bs) = \mathrm{GEV}^{-1}[G_{\bs}\{Z(\bs)\}; \mu(\bs), \sigma(\bs), \xi(\bs)]$, where $G_\bs(z)$ is the marginal distribution of $Z(\bs)$, which in the case of the \cite{Reich12} model is $G_\bs(z) = \exp \left(-z^{-1}\right), z > 0$, and in the $\theta>0$ case is given in \eqref{eq:maxid_margin}, and $\mathrm{GEV}^{-1}\{\cdot;\mu(\bs), \sigma(\bs), \xi(\bs)\}$ is the quantile function for a GEV distribution with location $\mu(\bs)$, scale $\sigma(\bs) > 0$, and shape $\xi(\bs)$. We treat $\tilde{Z}(\bs)$ as our response. In subsequent sections, Gaussian process priors are assumed for the GEV parameters $\mu(\bs)$, $\gamma(\bs) = \log\left\{\sigma(\bs)\right\}$, and $\xi(\bs)$, and Markov chain Monte Carlo (MCMC) methods are used to draw posterior samples for this model. The details of the MCMC sampler are given in the Supplementary Material.

To visualize some of the features of our model, we present some sample paths in Figure \ref{fig:realizations}. Realizations of $\tilde{Z}(\bs)$ on the unit square constructed using the Gaussian density ($L = 25$ evenly spaced basis functions, with standard deviation $\tau = 1/6$) and log-Gaussian process ($L = 15$ basis functions, with variance $\delta_K^2 = 25$ and range $\rho_K = 3/4$) basis functions are shown in Figure \ref{fig:realizations}. For illustration, the realizations have standard Gumbel margins everywhere in space, i.e., $\mu(\bs) = \xi(\bs) = 0$ and $\sigma(\bs) = 1$ for all $\bs\in \mathcal{S}$. The figure illustrates the role of $\alpha$ in controlling the relative contribution of the nugget process, and the impact of $\theta$ on the asymptotic dependence structure. Weaker tail dependence is present in the max-id models ($\theta>0$) than their max-stable counterparts ($\theta=0$). Moreover, the general shapes of the Gaussian density basis model realizations appear less resemblant of natural processes than do those from the log-Gaussian process basis model.

While we have only developed the model for a single realization of the process $\tilde{Z}(\bs)$ so far, the model can easily be generalized to accommodate multiple replicates in time, which we will use in Section \ref{sec:application_to_extreme_precipitation}. In particular, treating time replicates of the process to be independent, we denote the maxima process observed at spatial location $\bs$ and time $t$ by $\tilde{Z}_t(\bs)$, $t = 1, \ldots, T$.  We assume the marginal GEV parameters and basis functions do not vary in time, but allow the relative contribution of each basis function to be different for different time replicates of the process by taking the random basis coefficients to be $A_{l,t} \iid {\rm H}(\alpha,\alpha,\theta)$, $l = 1,\ldots, L$, and $t = 1,\ldots, T$.

\begin{figure}[t!]
  \begin{center}
		\includegraphics[width=\textwidth]{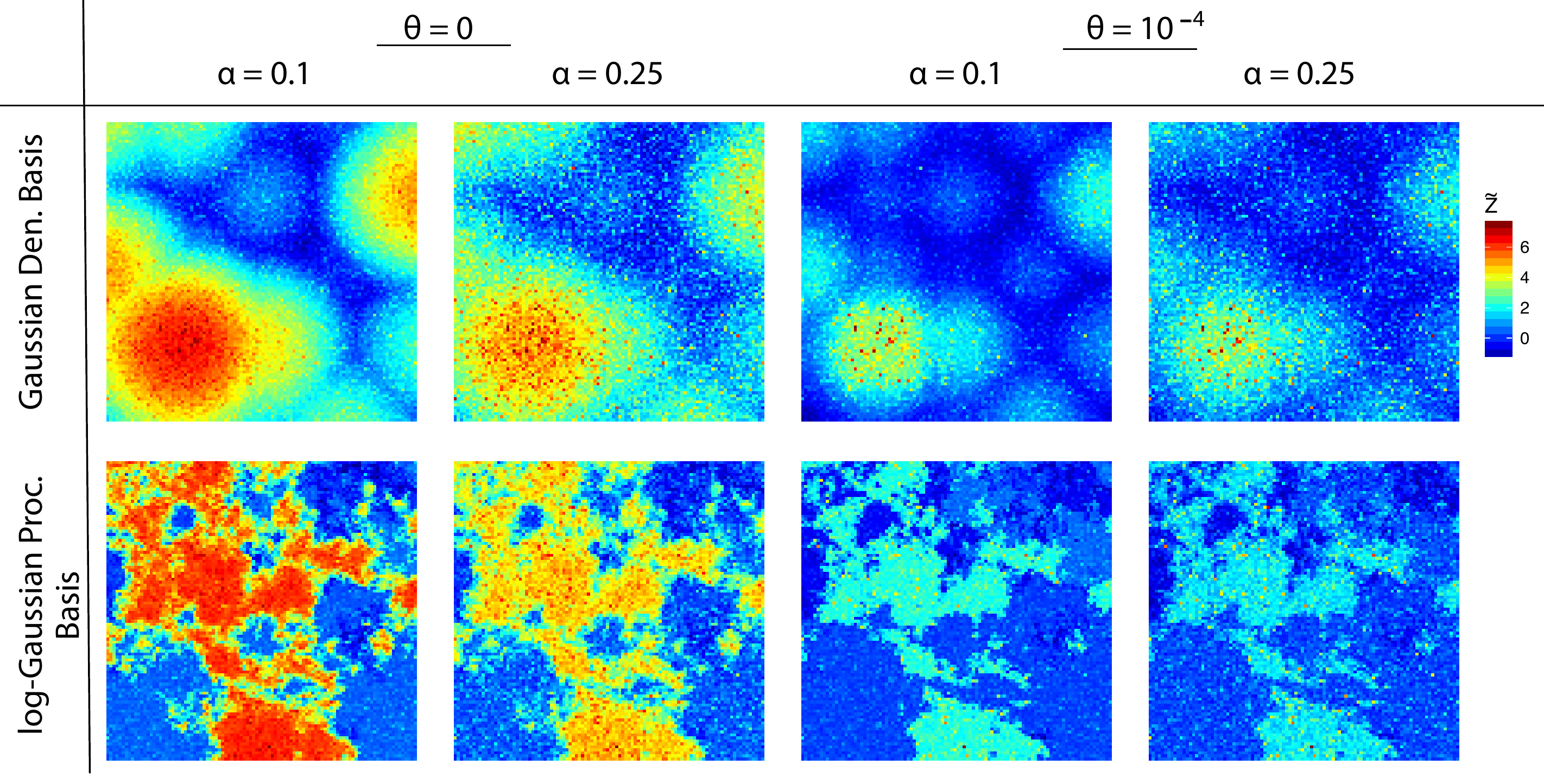}
		\caption{Realizations of the max-stable ($\theta=0$) and max-id ($\theta>0$) processes with Gaussian density (top) and log-Gaussian process (bottom) basis functions, plotted on Gumbel margins.}
		\label{fig:realizations}
	\end{center}
\end{figure}

\subsection{Dependence Properties} % (fold)
\label{sub:dependence_properties}
In this section, we explore the dependence properties of the proposed max-id model. The parameter $\theta$ plays a crucial role in determining the asymptotic dependence class. \cite{Reich12} show that $\{Z(\bs), \bs \in \mathcal{S}\}$ is asymptotically dependent and max-stable for $\alpha \in (0,1)$, $\theta=0$. However, when $\theta > 0$, this is no longer the case.
\begin{proposition}
The process $\{Z(\bs), \bs \in \mathcal{S}\}$ defined in Sections \ref{sub:subasymptotic_modeling_maxid}--\ref{sub:flexible_gaussian_process_factor_model} using the log-Gaussian process basis prior in \eqref{eq:ln_basis} is an asymptotically independent process when $\theta > 0$ and asymptotically dependent when $\theta = 0$ and $\alpha < 1$.
\end{proposition}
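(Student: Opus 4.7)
The plan is to compute the upper tail dependence coefficient $\chi_{12}=\lim_{u\uparrow 1}\pr\{F_{\bs_2}(Z(\bs_2)) > u \mid F_{\bs_1}(Z(\bs_1))>u\}$ at an arbitrary pair of distinct locations $\bs_1,\bs_2\in\calS$, first conditionally on the random kernels $K=(K_1,\ldots,K_L)$ and then unconditionally by integrating over the log-Gaussian prior \eqref{eq:ln_basis}. Recall that $\chi_{12}=0$ characterizes asymptotic independence while $\chi_{12}>0$ characterizes asymptotic dependence, so both halves of the proposition reduce to computing this limit in the two regimes.

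Conditional on $K$, the bivariate CDF $G(z_1,z_2\mid K)$ and marginals $G_j(z\mid K)$ are given in closed form by \eqref{eq:jointdistrMaxIDReichShaby} and \eqref{eq:maxid_margin}. Writing $w_j^l=\{z_j/K_l(\bs_j)\}^{-1/\alpha}=K_l(\bs_j)^{1/\alpha}z_j^{-1/\alpha}$, all $w_j^l\to 0$ as $z_1,z_2\to\infty$, and the Taylor expansion $(\theta+w)^\alpha=\theta^\alpha+\alpha\theta^{\alpha-1}w+O(w^2)$ gives, after summing over $l$ and subtracting $L\theta^\alpha$, the key relation $-\log G(z_1,z_2\mid K)\sim -\log G_1(z_1\mid K)-\log G_2(z_2\mid K)$ whenever $\theta>0$, which is the defining condition for conditional asymptotic independence. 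When $\theta=0$ the expressions reduce to the max-stable \cite{Reich12} process with conditional extremal coefficient $\sum_{l=1}^L\{K_l(\bs_1)^{1/\alpha}+K_l(\bs_2)^{1/\alpha}\}^\alpha$; for $\alpha<1$, strict concavity of $x\mapsto x^\alpha$ makes this strictly less than $2$ almost surely, yielding conditional asymptotic dependence.

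To pass to the unconditional laws $F_{\bs_j}(z)=\E_K[G_j(z\mid K)]$ and $F(z_1,z_2)=\E_K[G(z_1,z_2\mid K)]$, I would invoke dominated convergence, combining the uniform bound $w_j^l\leq z_j^{-1/\alpha}$ (since $K_l(\bs)\in(0,1)$) with the concavity inequality $(\theta+w)^\alpha-\theta^\alpha\leq\alpha\theta^{\alpha-1}w$ to dominate the Taylor remainders uniformly in $K$. For $\theta>0$ this yields
\[1-F_{\bs_j}(z)\sim\alpha\theta^{\alpha-1}z^{-1/\alpha}\,\E\left[\sum_{l=1}^L K_l(\bs_j)^{1/\alpha}\right],\qquad 1-F(z_1,z_2)\sim\{1-F_{\bs_1}(z_1)\}+\{1-F_{\bs_2}(z_2)\},\]
so the inclusion-exclusion identity $\pr(Z_1>z_u,Z_2>z'_u)=1-F_{\bs_1}(z_u)-F_{\bs_2}(z'_u)+F(z_u,z'_u)$ with $z_u=F_{\bs_1}^{-1}(u)$, $z'_u=F_{\bs_2}^{-1}(u)$ forces $\chi_{12}=0$. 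For $\theta=0$ the marginals are unconditionally unit Fr\'{e}chet, and the analogous expansion gives $\chi_{12}=2-\E_K[\sum_l\{K_l(\bs_1)^{1/\alpha}+K_l(\bs_2)^{1/\alpha}\}^\alpha]>0$ by the same concavity argument.

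The main obstacle I anticipate is this dominated-convergence step: a careless bound on the Taylor remainder could blow up as some $K_l(\bs)$ approaches $0$, but the log-Gaussian construction \eqref{eq:ln_basis} keeps $K_l(\bs)\in(0,1)$ almost surely, so the estimate $w_j^l\leq z_j^{-1/\alpha}$ and the $O(z^{-2/\alpha})$ remainder are uniform in $K$ and the interchange of limit and expectation is legitimate. A minor secondary point is verifying that the bivariate conclusion extends to the process-level notion of asymptotic (in)dependence stated in the proposition, which is immediate since the argument applies to every pair $\bs_1\neq\bs_2$.
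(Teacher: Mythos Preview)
Your proposal is correct, and its overall architecture matches the paper's: establish the conditional result given the kernels $K$, then pass to the unconditional statement by dominated convergence, handling $\theta>0$ and $\theta=0$ separately. The computational route differs slightly. The paper applies L'H\^opital's rule directly to the ratio $\pr\{Z(\bs_1)>z,Z(\bs_2)>z\mid K\}/\pr\{Z(\bs_2)>z\mid K\}$ and reads off the conditional limit ($0$ when $\theta>0$, and $2-\sum_l\{K_l(\bs_1)^{1/\alpha}+K_l(\bs_2)^{1/\alpha}\}^\alpha$ when $\theta=0$), then invokes DCT with the bound $|\chi_z\mid K|\leq 1$. You instead Taylor-expand the exponent $(\theta+w)^\alpha$ to get first-order asymptotics for the \emph{unconditional} survival functions and conclude via inclusion--exclusion. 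Your route has the minor advantage that the uniform-in-$K$ estimates $w_j^l\leq z_j^{-1/\alpha}$ and $(\theta+w)^\alpha-\theta^\alpha\leq\alpha\theta^{\alpha-1}w$ make the interchange of limit and expectation completely explicit at the level of the CDFs, whereas the paper's DCT step is terser (with detail deferred to the Supplement). For $\theta=0$ the two arguments coincide.
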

For a proof, see Appendix \ref{sec:tail_dep_properties}. Figure \ref{fig:chi_plots} displays two common dependence measures, $\chi_u = \pr\left[G_{\bs_1}\{Z(\bs_1)\}>u\mid G_{\bs_2}\{Z(\bs_2)\}>u\right]$ and $\bar{\chi}_u = {2 \log \pr\left[G_{\bs_2}\{Z(\bs_2)\} > u\right]\over\log \pr\left[G_{\bs_1}\{Z(\bs_1)\} > u, G_{\bs_2}\{Z(\bs_2)\} > u\right]}$, $0<u<1$ \citep{Coles99} to illustrate the role of $\alpha$ and $\theta$ in controlling the dependence properties of the tail process. Although notationally we have omitted the dependence of $\chi_u$ on $\bs_1$ and $\bs_2$, $\chi_u$ will also depend on the locations in the (non-stationary) Gaussian density basis case. Nevertheless, while the \cite{Reich12} max-stable process is non-stationary, it is approximately stationary for a dense set of spatial knots. An attractive feature of the proposed model is that as $\theta \downarrow 0$, $\chi_u$ and $\bar{\chi}_u$ transition smoothly from weak dependence to strong dependence for all $u < 1$.

% subsection dependence_properties (end)
\begin{figure}[t!]
  \begin{center}
		\includegraphics[width=1\textwidth]{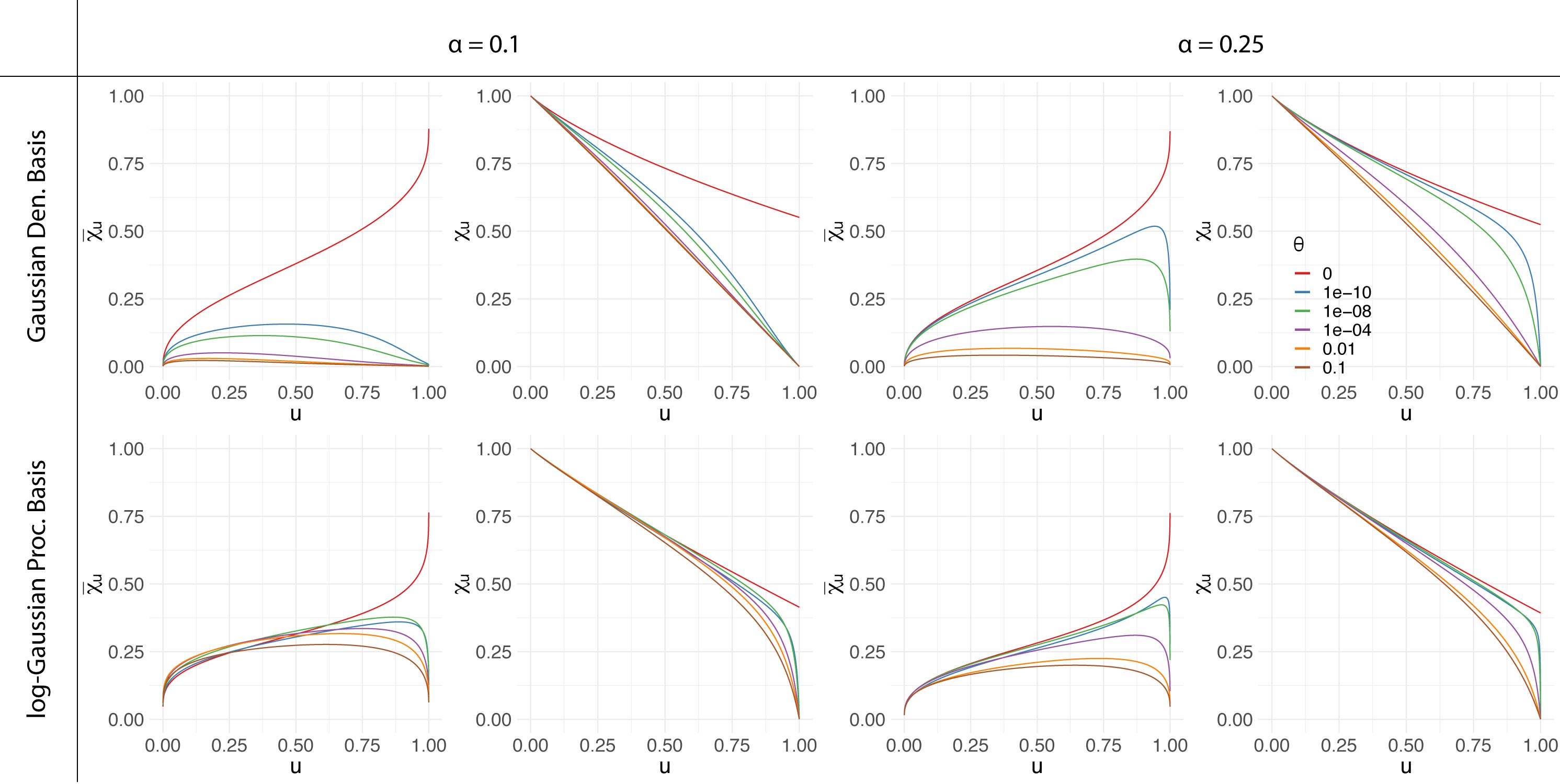}
		\caption{Dependence measures $\bar{\chi}_u$ and $\chi_u$ for the max-stable ($\theta = 0$) and max-id ($\theta > 0$) models for $Z(s), s \in \mathbb{R}$, using $L = 25$ Gaussian density ($\tau = 1/6$) and $L = 15$ log-Gaussian process ($\delta^2_K = 25$, $\rho_K = 3/4$) basis functions for $s_1 = 0$ and $s_2 = 1/4$. The knots of the Gaussian density basis functions are evenly spaced between $0$ and $1$. The figures in the bottom row correspond to $\chi_u$ after marginalizing over the log-Gaussian process basis functions based on $M = 1,000$ Monte Carlo draws.}
		\label{fig:chi_plots}
	\end{center}
\end{figure}
The extremal coefficient $\theta_{\mathcal{D}}$, studied by \citet{Schlather03}, is a measure of spatial dependence along the diagonal of the finite-dimensional distributions of max-stable processes. It takes on values from $\theta_{\mathcal{D}} = 1$ when the components are perfectly dependent to $\theta_{\mathcal{D}} = D$ when they are independent, and therefore can be interpreted as the effective number of independent variables. The finite-dimensional distributions of a max-stable process with unit-Fr\'{e}chet margins at level $z$ can be written in the form
\begin{equation}
\Pr\left\{Z(\bs_1)\leq z, \ldots, Z(\bs_D) \leq z\right\} = \exp\left\{-{\theta_{\mathcal{D}}(\bs_1, \ldots, \bs_D)\over z}\right\}, \quad \theta_{\cal{D}}(\bs_1, \ldots, \bs_D) \in [1,D],
\label{eq:ext_coef}	
\end{equation}
where $\theta_{\cal{D}}$ determines the spatial dependence and does \emph{not} depend on the level $z$.  The rigidity of the dependence structure across all quantiles limits the applicability of max-stable models to processes that exhibit varying spatial dependence types at different quantiles. From \eqref{eq:jointdistrMaxIDReichShaby}, we can see that the max-id extension of the \cite{Reich12} model does not possess this property for $\theta>0$.

Figure \ref{fig:spatial_chi} contrasts the spatial dependence features of the proposed models. We examine how the conditional probability of jointly exceeding a fixed quantile decays with increasing distance. Each panel shows the spatial decay of $\chi_u$ as a function of increasing spatial lag $h$ for several quantiles. We see qualitatively different behavior in the spatial decay of dependence at different quantiles between the max-stable and max-id models. In the max-stable cases, the conditional exceedance probability $\chi_u$ at short spatial lags $h$ is very similar at all levels $u$ of the distribution. The max-id models allow for more flexibility, as can be seen by the attenuated curves for higher quantiles and wider array of spatial decay types. From Figure \ref{fig:spatial_chi}, it can be seen that for $\theta >0$, the parameter $\alpha$ plays a role in how precipitous the decay in spatial dependence is with increasing distance, with smaller $\alpha$ corresponding to steeper decay. Also, just as in \cite{Reich12}, $\alpha$ determines the contribution of the nugget effect, which is greater when $\alpha$ is large and lesser when $\alpha$ is small.

\begin{figure}[t!]
  \begin{center}
		\includegraphics[width=0.75\textwidth]{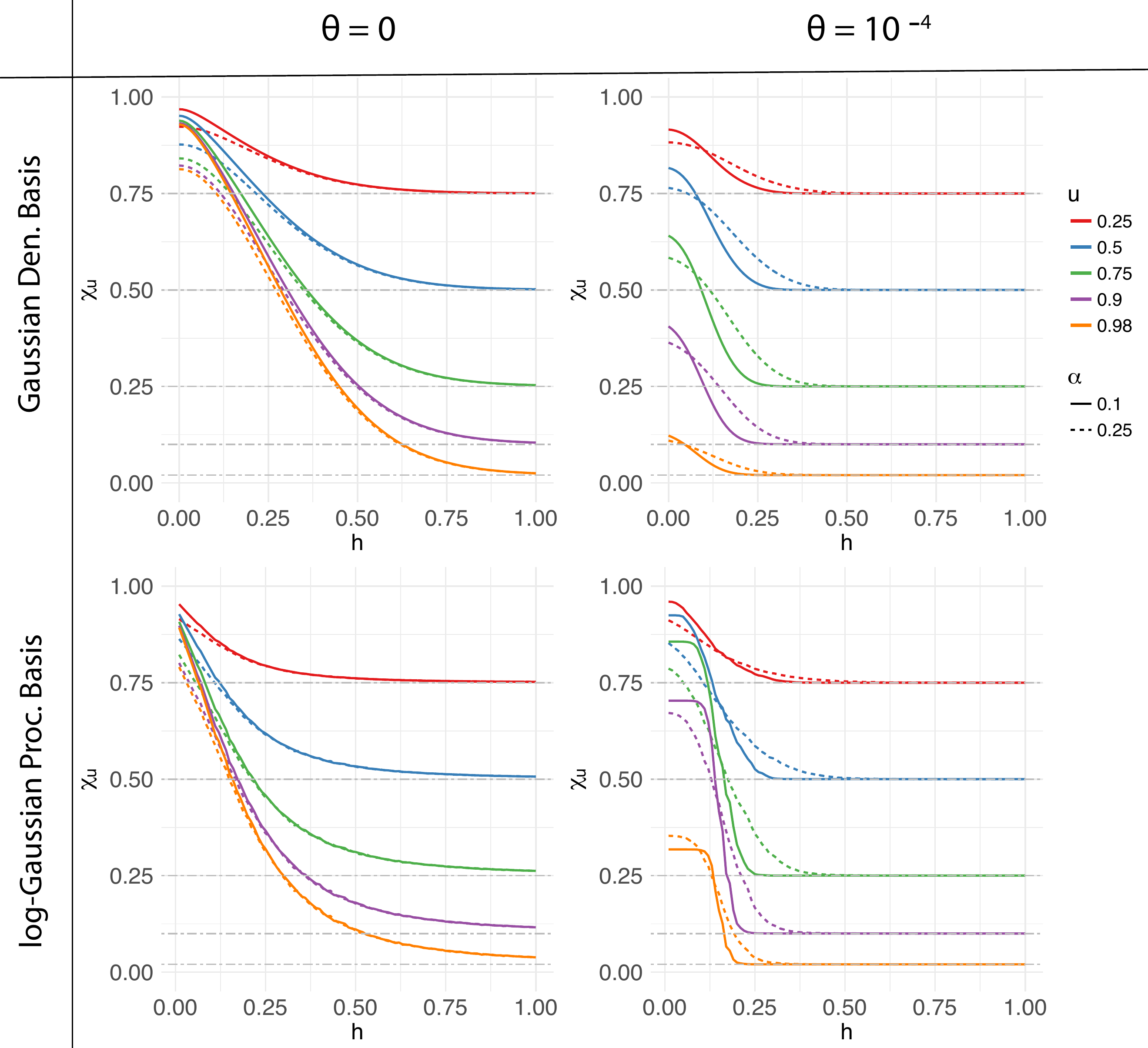}
		\caption{Dependence measure $\chi_u(h)$ between $Z(s_0)$ and $Z(s_0 + h)$ for $s_0 = 0$ as a function of $h$ for max-stable (left column) and max-id (right column) models on $\calS=[0,1]$, with $L = 25$ Gaussian density basis functions with $\tau = 1/6$ (top row) and $L = 15$ log-Gaussian process basis functions with  $\delta_K^2 = 25$ and $\rho_K = 3/4$ (bottom row) basis functions for varying $\alpha$ and $u$. Gaussian density basis functions are evenly spaced between 0 and 1. Estimates of $\chi_u(h)$ in the log-Gaussian process basis model are based on 50,000 Monte Carlo replicates. Horizontal dash-dot gray lines representing the values of $\chi_u$ for independent $Z(s_0)$ and $Z(s_0 + h)$ are plotted for reference.}
		\label{fig:spatial_chi}
	\end{center}
\end{figure}

To confirm that our MCMC algorithm produces reliable results, and to evaluate the algorithm's ability to infer the parameters under different regimes, we conduct a simulation study for both the Gaussian density basis and the log-Gaussian process basis models. The simulation study design and results are described in detail in the Supplementary Material. In all scenarios considered, credible intervals achieve nearly nominal levels, confirming the reliability of our MCMC algorithm.

\section{APPLICATION TO EXTREME PRECIPITATION}
\label{sec:application_to_extreme_precipitation}

\subsection{Data and Motivation} % (fold)
\label{sub:data_and_motivation}
In this section, we apply our model to extreme precipitation over the northeastern United States and Canada. Our aim is to understand the spatial dependence of extreme precipitation while accounting for measurement uncertainty. The data for this application were obtained from \url{https://hdsc.nws.noaa.gov/hdsc/pfds/pfds_series.html}, which is maintained by the National Oceanic and Atmospheric Administration (NOAA). Observations consist of annual maximum daily precipitation accumulations (in inches) observed between 1960 and 2015 at $N = 646$ gauge stations (see Figure \ref{fig:gauge_knot_locations}). The observation at gauge location $\bs_i$, $i= 1,\ldots, 646$, and year $t = 1, \ldots, 56$, is denoted by $\tilde{Z}_t(\bs_i)$.

\begin{figure}[t!]
  \begin{center}
		\includegraphics[width=0.8\textwidth]{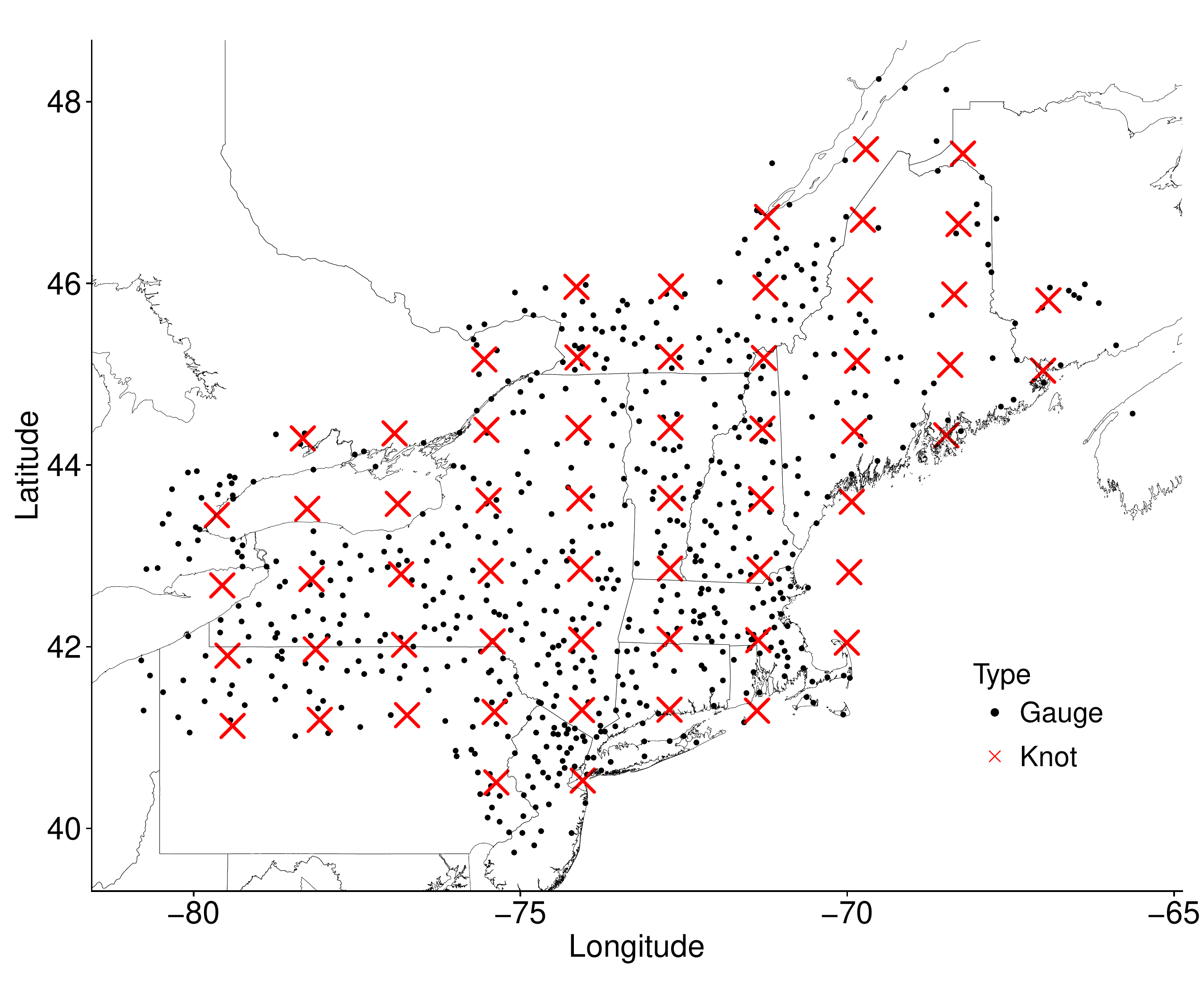}
		\caption{Precipitation gauge locations ($N = 646$) across the northeastern United States and Canada are plotted as black dots and Gaussian density basis knot locations ($L = 60$) are plotted as red crosses.}
		\label{fig:gauge_knot_locations}
	\end{center}
\end{figure}

\subsection{Model Fitting and Validation} % (fold)
\label{sub:model_estimation_and_validation}
The precipitation data are analyzed by applying the four max-id models described in Section \ref{sec:hierarchical_construction_of_spatial_maxid}, namely (M1) Gaussian density basis, $\theta = 0$; (M2) Gaussian density basis, $\theta > 0$; (M3) log-Gaussian process basis, $\theta = 0$; and (M4) log-Gaussian process basis, $\theta > 0$, where realizations of the process for each year are treated as i.i.d.\ replicates. Although further temporal dependence and trends could be modeled in both the GEV marginal parameters and basis scaling factors $A_{l,t}$, Kwiatkowski-Phillips-Schmidt-Shin tests \citep{Kwiatkowski92} for temporal non-stationarity among the annual maxima were performed separately for each station, and 85\% of stations yielded no evidence for temporal non-stationarity at confidence level $95\%$. The proposed model would be more complex and computationally demanding to fit if one were to account for temporal non-stationarity. Therefore, for the sake of simplicity, and since overall the data do not appear to be highly non-stationary over time, we will ignore this aspect in our analysis. Accounting for temporal non-stationarity would be an interesting avenue of future research to further develop this model.

In particular, both the dependence model and GEV marginal distributions are assumed to be constant over time.  We assume independent Gaussian process priors, each with constant mean $\beta_\psi \sim N(0, 100)$ and stationary exponential covariance function $C(h) = \delta^2_\psi \exp(-h/\rho_\psi), h \geq 0$, $\psi\in\{\mu, \gamma\}$, on the location $\mu(\bs)$ and log-scale $\gamma(\bs) \equiv \log\{\sigma(\bs)\}$ marginal parameters of the GEV distribution, with half-normal priors for $\delta^2_\psi \sim N_+(0, 100)$ and $\rho_\psi \sim N_+(0, \max_{i,j}(||\bs_i - \bs_j||)^2)$. Due to the difficulty in estimating the shape parameter \citep{Cooley07,Opitz18}, we use a spatially constant prior, $\xi \sim N(0, 100)$. The dependence parameter priors are as follows: For $\alpha$ and $\theta$, we take $\alpha \sim {\rm Unif}(0,1)$ and $\theta \sim N_+(0,100)$. For the Gaussian density basis models, we use $L = 60$ knot locations on an evenly spaced grid (see Figure \ref{fig:gauge_knot_locations}). A half normal prior is put on the Gaussian density bandwidth parameter $\tau \sim N_+(0, \max_{i,j}(||\bs_i - \bs_j||)^2)$. In the case of the log-Gaussian process basis models, we consider $L = 10, 15,$ and $20$ basis functions. More basis functions enable better representation of the data, but at the risk of overfitting. Priors $\delta_K^2 \sim N_+(0, 100)$ and $\rho_K \sim N_+(0, \max_{i,j}(||\bs_i - \bs_j||)^2)$ are assumed for the exponential covariance parameters. Handling missing values is straightforward using the proposed approach. For each iteration of the MCMC algorithm, missing values are sampled from the posterior predictive distribution; this is detailed in the Supplementary Material. We run each MCMC chain under two different parameter initializations for 40,000 iterations using a burn-in of 10,000 with data from $546$ stations, reserving $100$ stations for model evaluation. Some of the parameters, particularly $\beta_\mu$ and $\beta_\gamma$, were quite slow to converge. In all four cases, the posterior densities were similar across the two initializations. 

It is currently not possible to fit existing max-stable, inverted-max-stable \citep{Wadsworth12}, and other max-id models \citetext{see, e.g. \citealp{Huser18}, \citealp{Padoan13}} using a full likelihood or Bayesian approach when the number of spatial locations is large; see \cite{Castruccio16}, \cite{Dombry17} and \cite{Huser19}. Under these constraints, a natural alternative for comparison is the model for block maxima proposed by \cite{Sang10}, which also belongs to the asymptotic independence class. Specifically, let $\{W(\bs), \bs \in \mathcal{D}\}$ be a mean-zero Gaussian process with exponential correlation function and unit variance. The annual maxima are then modeled as $Z(\bs) = \mathrm{GEV}^{-1}[\Phi\{W(\bs)\}; \mu(\bs), \sigma(\bs), \xi]$, where the location $\mu(\bs)$ and $\log\{\sigma(\bs)\}$ each follow mean zero Gaussian processes with exponential covariance functions, with the same priors as above, and $\Phi$ denotes the standard normal distribution function. We refer to this as the the GEV-Gaussian process copula model.

To compare models, we calculate out-of-sample log-scores \citep{Gneiting07}, for annual maxima at the 100 holdout stations, which is simply the log-likelihood of the holdout data for each model based on conditional predictive simulations of the latent model parameters at the unobserved sites. Since the log-scores are calculated on holdout data, they implicitly account for model complexity. We also emphasize that because the predictions are based on the joint likelihood, the log-scores reflect not only the marginal fits, but also how well the model captures the dependence characteristics of the observed data. The best log-score (higher scores are better) of the two initializations for each model is reported in Table \ref{tab:log_scores}. The max-id models ($\theta>0$) outperform their max-stable counterparts ($\theta=0$). The log-score for the GEV-Gaussian process copula model is worse than the other models considered. The estimated marginal surfaces are similar across all of the models considered, indicating that the misspecification is due to differences in the dependence model for the annual maxima.

The max-id, log-Gaussian process basis model with $\theta >0$ and $L = 15$ basis functions has the highest log-score (shown in bold), suggesting it should be preferred among the considered models for this data application, and as such we focus on this model for the remainder of our analysis. For this model, the posterior mean (95\% credible interval) estimates of the dependence parameters are $0.725 \,\,(0.702, 0.747)$ for $\alpha$, $0.024 \,\,(0.006, 0.060)$ for $\theta$, and for the spatial basis functions $33.9 \,\,(23.8, 47.2)$ for $\delta_K^2$ and $462 \,\,(332, 642)$ miles for $\rho_K$, suggesting the presence of some residual dependence beyond that explained by spatially-varying marginal parameters. Also, while we have specified vague priors on the model parameters, the posterior distributions are highly concentrated around their corresponding posterior means. Although the proposed inference scheme does not allow for jumps between $\theta = 0$ and $\theta > 0$, the posterior samples of $\theta$ are still somewhat informative about the asymptotic dependence class. In particular, since the dependence properties of our model are smooth in $\theta$ at zero, the fact that the 95\% credible interval for $\theta$ is relatively symmetric and distant from 0 gives support for asymptotic independence among precipitation extremes.

To validate the decision of having the same dependence parameters $\alpha$ and $\theta$ over the entire region, log-Gaussian process basis models with $\theta > 0$ were also separately fitted to four subregions, two inland and two coastal. The 95\% credible intervals for $\alpha$ and $\theta$ overlap with those fitted to the entire region, suggesting homogeneous spatial dependence of the process over the study region.

\begin{table}[t!]
\begin{center}
%\resizebox{\textwidth}{!}{\begin{minipage}{0.7\textwidth}
\centering
\caption{Log-scores estimated from annual maxima observed at the holdout stations are used to compare the four models presented in Section \ref{sec:hierarchical_construction_of_spatial_maxid}, and the GEV-Gaussian process copula model. Higher log-scores correspond to better fit. The max-id, log-Gaussian process basis model has the highest log-score (shown in bold).}
\label{tab:log_scores}
\begin{tabular}{r|c|ccc|c}
\multicolumn{1}{c}{} & \multicolumn{1}{c}{Gaussian Density Basis} &  \multicolumn{3}{c}{log-Gaussian Process Basis}  & GEV-Gaussian Process Copula\tabularnewline
L  & 60 & 10 & 15 & 20 & \\
\hline
$\theta = 0$ &-5292.5&  -5410.7 &-5406.4& -5415.2 & \multirow{2}{*}{-6097.048}\\
$\theta >0$ &-5218.3 & -5194.6 &\textbf{-5172.6} & -5207.9 \\
\end{tabular}
%\end{minipage}}
\end{center}
\end{table}

Further, to examine the model fit, we compare empirical and model-based estimates of $\chi_u$ as a function of spatial lag $h$ and threshold $u$ for the holdout stations (Figure \ref{fig:lnmid_precip_chi}). 
\begin{figure}[t!]
  \begin{center}
		\includegraphics[width=\textwidth]{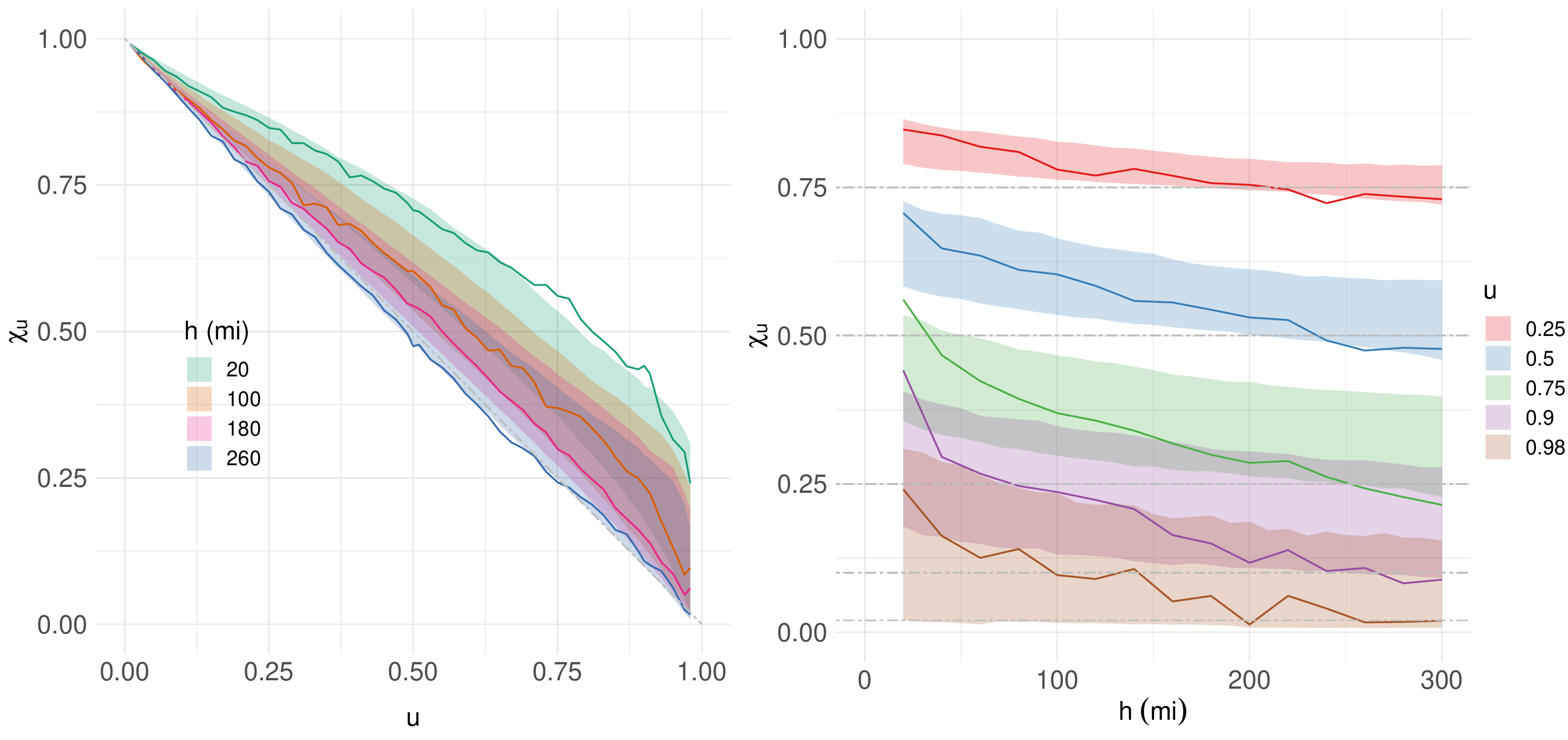}
		\caption{The left panel shows $\chi_u$ as a function of $u$ for fixed spatial lags $h = 20, 100, 180, 260$ miles calculated for the 100-holdout stations. Empirical estimates are shown as a solid black line, and max-id, log-Gaussian process basis model 95\% credible intervals are shown as gray ribbons. The decay of $\chi_u$ towards zero as $u \rightarrow 1$ suggests that daily precipitation are asymptotically independent. To understand the spatial dependence of extreme precipitation at increasingly extreme levels, empirical (solid lines) and model 95\% credible intervals (ribbons) of $\chi_u(h)$ for the holdout stations are plotted for several quantiles $u = 0.25, 0.5, 0.75, 0.9, 0.98$ (right panel). Horizontal dash-dot gray lines representing the values of $\chi_u$ under an everywhere-independent model are plotted for reference. The plot shows good overall agreement between the model fits and empirical estimates.}
		\label{fig:lnmid_precip_chi}
	\end{center}
\end{figure}
The left panel shows $\chi_u$ as a function of $u$ for at fixed lags $h=20,100,180,260$ miles, and the right panel shows the spatial decay of $\chi_u$ as a function of spatial lag $h$ for several fixed marginal quantiles $u = 0.25, 0.5, 0.75, 0.9, 0.98$. Empirical estimates are represented by solid lines and $95\%$ credible intervals for each model by shaded ribbons. From the left panel, we can see that the max-id model captures the asymptotic independence behavior of the precipitation data quite well. The max-stable model slightly underestimates the relatively strong dependence at shorter distances, but with comparable coverage to the max-id model at other distances (see Supplementary Material). The slight discrepancy at shorter distances may be due to the phenomenon described by \cite{Robins00} wherein intervals from posterior summaries like $\chi_u$ that are calculated from MCMC draws are too narrow. From the right panel, we deduce that the annual maximum precipitation data exhibit quite strong spatial dependence up to about $200$ miles, with weaker spatial dependence at higher quantiles. Moreover, $\chi_u$ decays towards its independence level as a function of distance $h$ faster at the $0.9$ and $0.98$ quantiles than at the $0.25$ and $0.5$ quantiles. %However, the data show a fast decay of $\chi_u$ as a function of $u$, supporting asymptotic independence.

In order to assess the joint spatial prediction skill of our model, we display in Figure \ref{fig:lnmid_qqplots} quantile-quantile (QQ)-plots for group-wise summaries of the annual maxima taken over the $100$ holdout stations (see \citet{Davison12} for a similar analysis). The results show adequate correspondence between the model-based and empirical quantiles of the group-wise means, whereas the observed group-wise minima (maxima, respectively) appear to be slightly underestimated (overestimated, respectively) by the model. Corresponding QQ-plots when $\theta = 0$ (not shown) give similar patterns with minima (maxima, respectively) lying slightly further above (below, respectively) the 95\% credible intervals.

\begin{figure}[t!]
  \begin{center}
		\includegraphics[width=\textwidth]{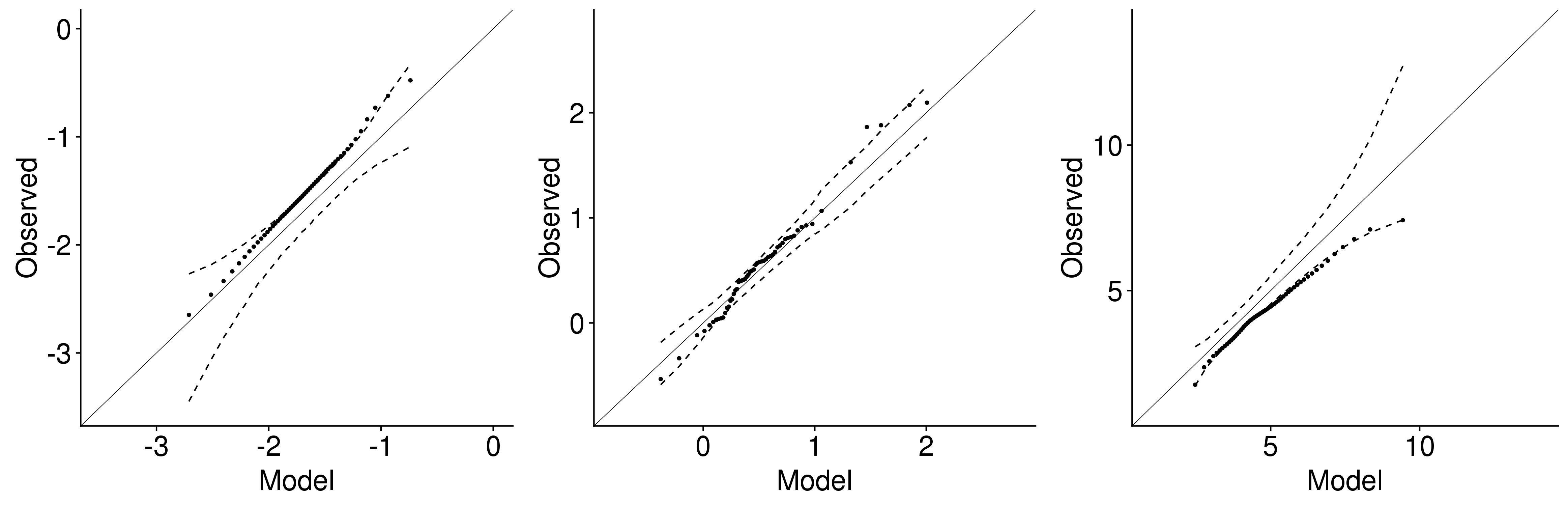}
		\caption{QQ-plots of the observed and predicted group-wise minima (left), mean (center), and maxima (right) taken over the annual maxima from all 100 holdout stations. The dashed lines represent 95\% credible intervals. The plots reflect reasonable correspondence between the empirical and modeled multivariate distributions. To account for the fact that the marginal GEV distributions vary across stations, observations are first transformed to unit Gumbel scale using the probability integral transform for the GEV marginal distributions at each station from the fitted model.}
     	\label{fig:lnmid_qqplots}
	\end{center}
\end{figure}

Maps of the marginal posterior predictive means and standard deviations of the 0.99 quantile of annual maxima (i.e., 100-year return level) for the max-id, log-Gaussian process basis model are shown in Figure \ref{fig:lnmidQ99}. The posterior mean surfaces are consistent with marginal quantile surfaces for the region as reported in NOAA Atlas 14 \citep{Perica13}. The posterior standard deviation surface shows the greatest variability in Maine, Long Island, and along the boundary of the observation region where there are relatively few gauge locations. For illustration, observed maxima in 2012 and the posterior predictive mean for that year are plotted in Figure \ref{fig:ppred_2012}. Recall that only the scaling factors $A_{l,t}$ vary in time. The posterior predictive mean appears to capture the general spatial trend of the maxima observed in 2012 well. 
\begin{figure}[t!]
  \begin{center}
		\includegraphics[width=\textwidth]{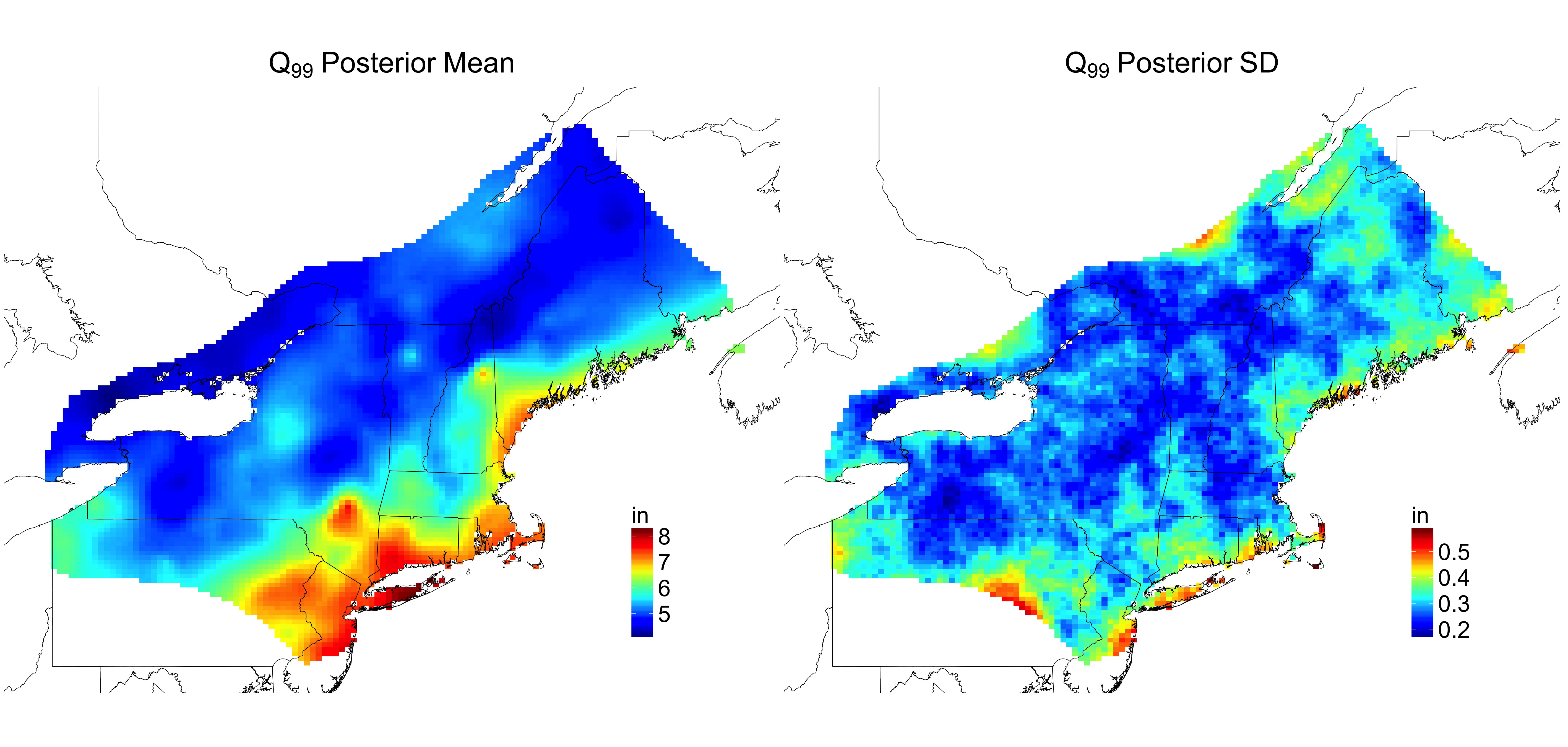}
		\caption{Pointwise posterior predictive mean (left) and standard deviation (right) of the 100-year return level of daily precipitation.}
		\label{fig:lnmidQ99}
	\end{center}
\end{figure}

\begin{figure}[t!]
  \begin{center}
		\includegraphics[width=\textwidth]{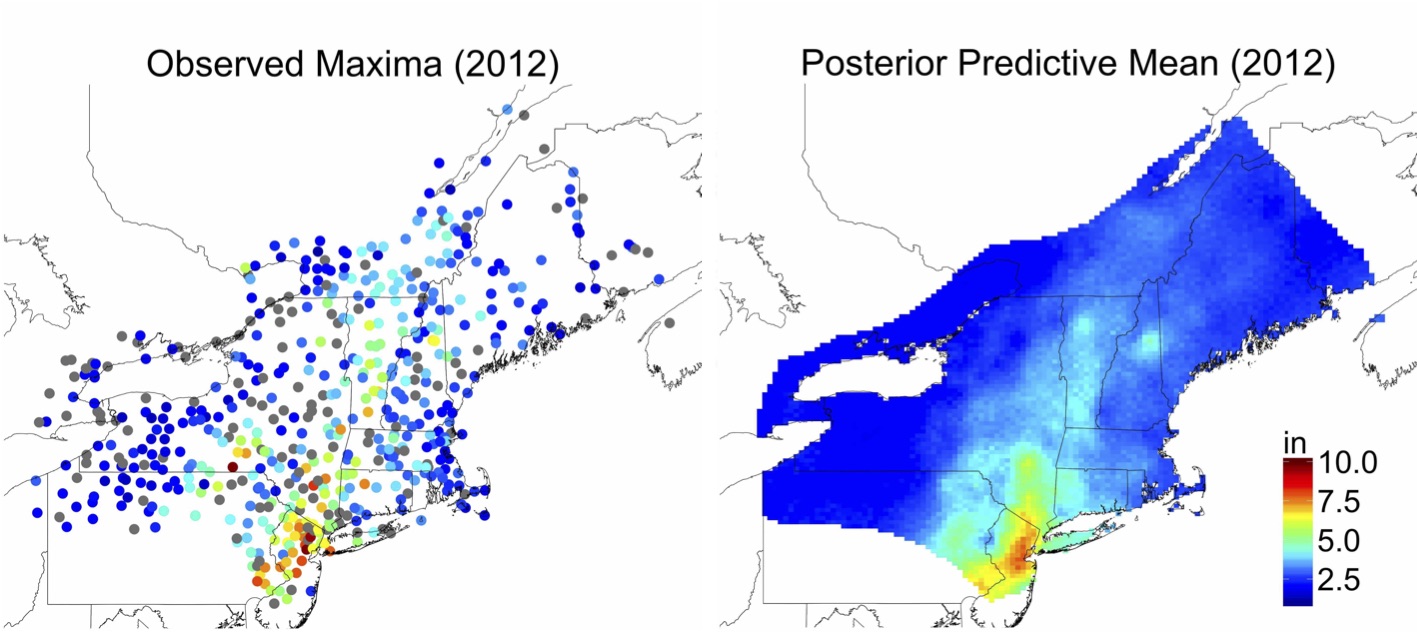}
		\caption{Observed precipitation accumulations (left), a single posterior predictive draw (middle), and posterior predictive means (right) for the year 2012. Missing values are shown in gray.}
		\label{fig:ppred_2012}
	\end{center}
\end{figure}

\subsection{Principal Modes of Spatial Variability Among Precipitation Extremes} % (fold)
\label{sub:analysis_of_the_principal_modes_of_spatial_variability_among_precipitation_extremes}
Spatial principal component analysis (PCA) \citep{Demsar13,Jolliffe02} and Empirical Orthogonal Functions \citep{Hannachi07} have proven to be useful methods for exploring the main large scale features of spatial processes. However, aside from recent work by \citet{Morris16} and \citet{Cooley18}, little has been done to this end for spatial extremes. The model we have proposed allows for an exploratory visualization that is very similar to a spatial PCA method that \citet{Demsar13} refers to as Atmospheric Science PCA in their review of Spatial PCA methods, where the data consist of time replicates of a univariate spatial process observed at several locations. 

An attractive feature of the log-Gaussian process basis model is that it provides a low-dimensional representation of the predominant modes of spatial variability among extremes. Analogously to factor analysis, the primary spatial trends among extreme precipitation can be described by a subset of the spatial basis functions $K_l(\bs)$ that contribute the most to the overall process. To achieve this, motivated by PCA factorization, which finds the directions of maximum variance in the data, we rank the spatial basis functions $K_l(\bs)$ $l = 1,\ldots, L$, by the posterior year-to-year variation of their corresponding basis coefficients $A_{l,t}$ (i.e., higher posterior variance corresponds to lower rank). Arguably, both the means and variances of the coefficients $A_{l,t}$ play a role in the relative contribution of the corresponding basis function to the overall process. However, from inspection, the basis coefficients with the highest posterior variance also have the highest posterior means. Examining the variance of the basis coefficients for each $l = 1, \ldots, L$, against their ranks give a rough indication of the number of basis functions with sizable contributions to the overall process. Also, while label switching is possible, from inspection of the MCMC samples of the basis functions, this does not appear to be a major concern for this application. If label switching is present, application of the pivotal reordering algorithm proposed by \cite{Marin05, Marin07} can be used to permute the labels of the basis functions and scaling factors before ranking the basis functions. Posterior means of the first six spatial basis functions are shown in Figure \ref{fig:top_K}. Most of the top ranked factor means in the $L = 15$ basis function case were also identified as top ranked functions in the $L = 10$ and $L = 20$ case (see Supplementary Material).

\begin{figure}[t!]
  \begin{center}
		\includegraphics[width=\textwidth]{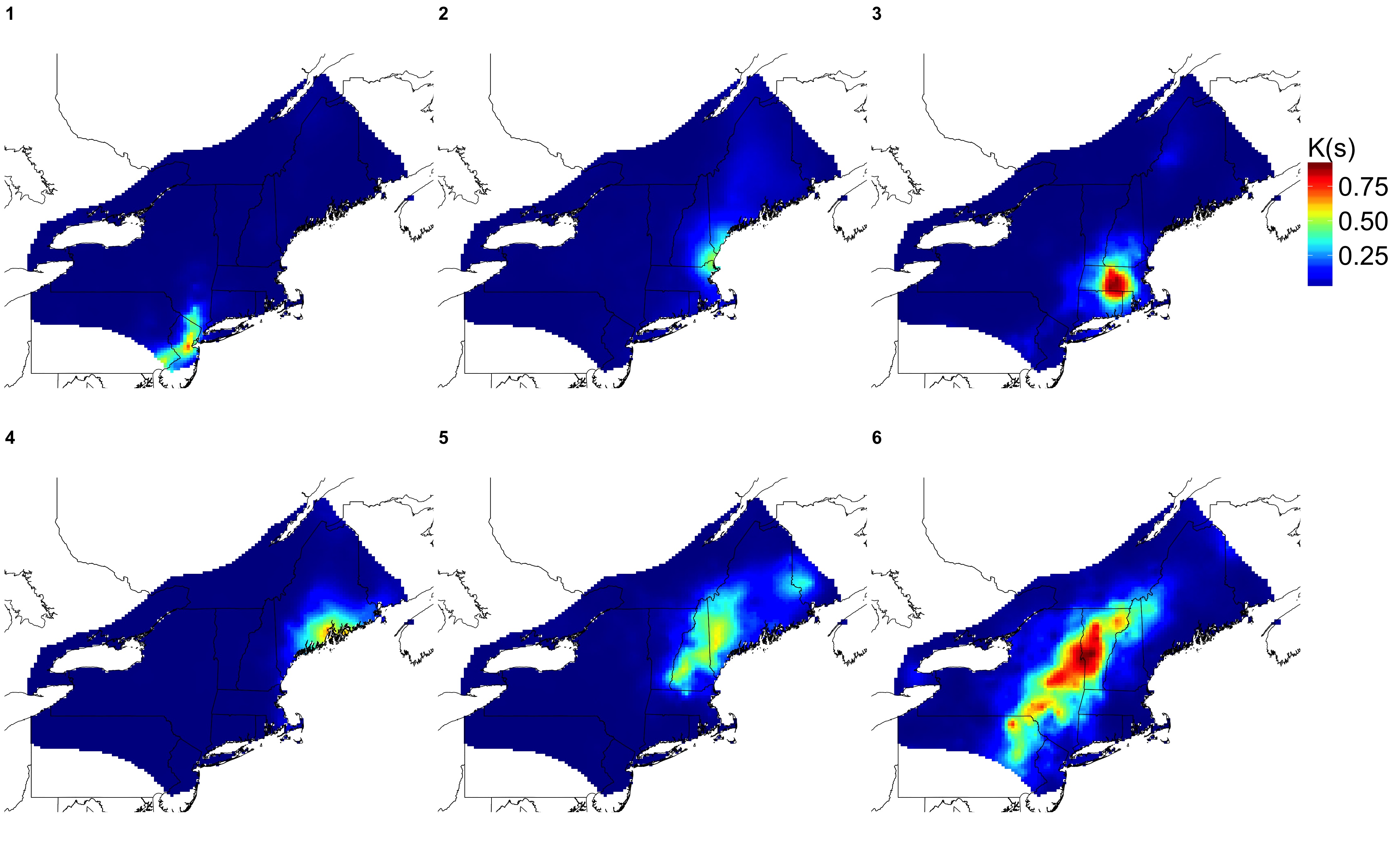}
		\caption{First six spatial basis functions ordered by the variance of their corresponding random basis coefficients from largest to smallest (left to right, top to bottom) for the $L = 15$ basis function model. The year-to-year variation among the coefficients of these first six basis functions accounts for 97\% of the total year-to-year variation among all of the basis coefficients. The shapes of the latent factors have reasonable interpretations in terms of geographic coastal and mountain features.}
		\label{fig:top_K}
	\end{center}
\end{figure}
Unlike the pointwise marginal surfaces, which do not provide any information about the joint dependence of extremes, these basis functions capture spatial regions of simultaneous (in this case, merely the same year) extreme precipitation. The proportion of the total variation among the $A_{l,t}$ accounted for by variation in the coefficients of each of the first six basis functions is $0.48, 0.33, 0.07, 0.04, 0.03,$ and $0.02$ respectively. This does not imply that the top ranked factor is the dominating kernel $48\%$ of the time. Rather, if the variance of the scaling coefficients for the $l$th factor is high, then the year-to-year differences in the spatial modes of extremes should be well described by the peaks and troughs of the $l$th factor. For example, if $K_l(\bs)$ has a peak around some location $\bs^*$ then the conditional GEV distribution (given the factors and scaling coefficients) will be stochastically larger at $\bs^*$ in years when $A_{l,t}$ is large and smaller when $A_{l,t}$ is small. Therefore, the low ranked factors describe regions where precipitation tends to be extreme together or more moderate together. The latent factors in Figure \ref{fig:top_K} have reasonable physical interpretations that are reflective of natural geographic features. In particular, they resemble observed patterns in extreme precipitation events occurring along the coast and mountain range borders. Just as with spatial PCA, we hesitate to make strong interpretations of the identified factors. However, the first four factors appear to correspond to coastal regions in New Jersey and New England that tend to be affected by the same localized tropical cyclones and convective storms. The last two modes appear to reflect the orographic effect of mountains in cloud formation, which cause moist air to rise.

\section{DISCUSSION} % (fold)
\label{sec:discussion}
In this paper, we extend the max-stable model for spatial extremes developed by \citet{Reich12} in several ways. First, by using flexible log-Gaussian process basis functions, our model provides a more realistic low-dimensional factor representation that can be used to visualize the main modes of spatial variability among extremes. Second, our approach relaxes the rigid spatial dependence structure imposed by max-stable models, while possessing the positive dependence inherent to distributions for maxima. Inference on the tail dependence class is also possible, as our model can capture asymptotic independence when $\theta >0$, while having an asymptotically dependent, max-stable model on the boundary of the parameter space (when $\theta = 0$). 

We apply our model to extreme precipitation over the northeastern United States and Canada. Because it accounts for the spatial dependence among maxima and we are able to efficiently make conditional draws from our fitted model. The precipitation predictions from our model could be incorporated into a hydrological model for the flow path dynamics that incorporates factors like drainage basin topography, land use, and land cover to describe how precipitation falling over a common catchment translates into drainage and potential flooding. The precipitation analysis does not account for the cumulative effect of heavy precipitation over several days, which can overload an urban stormwater drainage system that is already operating at capacity. Further temporal modeling of the marginal distributions and space-time dependence characteristics would facilitate such an analysis; see, e.g., \citet{Huser14} for space-time modeling of precipitation extremes using max-stable processes.

For future work, adding a point mass at $\theta = 0$ in the prior and proposal distributions would make it possible to account for model uncertainty and simultaneously perform model selection directly within the MCMC. Finally, while our focus in this paper has been on flexible sub-asymptotic modeling of maxima, another avenue for research is to investigate relaxing the rigid dependence structure of limiting generalized Pareto process models for peaks-over-threshold data \citep[see, e.g.,][]{Castro18,HuserWadsworth19}.

\appendix

\section{Model Tail Dependence Properties}
\label{sec:tail_dep_properties}
Since the marginal distributions of $Z(\bs)$ are the same when constructed using the log-Gaussian process basis, $Z(\bs_1)$ and $Z(\bs_2)$ are asymptotically independent if $\pr(Z(\bs_1) >z|Z(\bs_2)>z) \rightarrow 0$ as $z \rightarrow \infty$. The marginal distribution of the process at location $\bs$ conditional on the basis functions is $G_{\bs}\{z|K_l(\bs), l = 1,\ldots L\} = \exp(L \theta^\alpha - \sum_{l= 1}^L[\theta + \{{K_l(\bs)/z}\}^{1/\alpha}]^\alpha)$, and the joint distribution at two locations $\bs_1$ and $\bs_2$ is
\[
\pr\{Z(\bs_1) \leq z_1, Z(\bs_2)\leq z_2|K_l(\bs), l = 1,\ldots L\} = \exp\left(L \theta^\alpha - \sum_{l= 1}^L\left[\theta + \left\{{K_l(\bs_1)\over z_1}\right\}^{1/\alpha} + \left\{{K_l(\bs_2)\over z_2}\right\}^{1/\alpha}\right]^\alpha\right).
\]
For brevity, we will drop the indices $l = 1,\ldots L$, and write, e.g., $G_{\bs}\{z|K_l(\bs)\} \equiv G_{\bs}\{z|K_l(\bs), l = 1,\ldots L\}$. By $L'Hospital$'s rule, we obtain
{\footnotesize \begin{align}
&\chi(\bs_1, \bs_2)|K_l(\bs) = 1 +\underset{z \rightarrow \infty}{\lim} {{{\rm d}\over {\rm d}z} G_{\bs_1}\{z|K_l(\bs)\}\over{{\rm d}\over {\rm d}z}G_{\bs_2}\{z|K_l(\bs)\}} - \underset{z \rightarrow \infty}{\lim} {{{\rm d}\over {\rm d}z}\Pr\{Z(\bs_1)\leq z, Z(\bs_2)\leq z|K_l(\bs)\}\over{{\rm d} \over {\rm d}z}G_{\bs_2}\{z|K_l(\bs)\}} \label{eq1} \nonumber \\
&= 2 - \underset{z \rightarrow \infty}{\lim} {\Pr\{Z(\bs_1)\leq z, Z(\bs_2)\leq z|K_l(\bs)\}\over G_{\bs_2}\{z|K_l(\bs)\}}
\underset{z \rightarrow \infty}{\lim} {
\sum_{l= 1}^L\left[\theta + \left\{{K_l(\bs_1)\over z}\right\}^{1/\alpha} + \left\{{K_l(\bs_2)\over z}\right\}^{1/\alpha}\right]^{\alpha-1}\left\{K_l(\bs_1)^{1/\alpha} + K_l(\bs_2)^{1/\alpha}\right\}
\over \sum_{l= 1}^L\left[\theta + \left\{{K_l(\bs_2)\over z}\right\}^{1/\alpha}\right]^{\alpha-1}K_l(\bs_2)^{1/\alpha}} \nonumber \\
&= 0 \nonumber
\end{align}}
when $\theta >0$. Finally, by application of the Dominated Convergence Theorem, since $|\chi_z(\bs_1, \bs_2)|K_l(\bs)| < 1$, we obtain $\chi(\bs_1, \bs_2) = {\rm E}\{\underset{z \rightarrow \infty}{\lim}  \chi_z(\bs_1, \bs_2)|K_l(\bs)\}= 0$ for all $\bs_1, \bs_2 \in \mathcal{S}$ For more detail, see the Supplementary Material.

In the case of $\theta = 0$ and $\alpha < 1$, \cite{Reich12} showed that $Z(\bs)\mid K_l(\bs),$ is max-stable with extremal coefficient $\theta_2(\bs_1, \bs_2)\mid K_l(\bs) = \sum_{l = 1}^L\left[K_l(\bs_1)^{1/\alpha} + K_l(\bs_2)^{1/\alpha}\right]^\alpha$. Using the relation for max-stable processes with unit Fr\'{e}chet margins that $\chi(\bs_1, \bs_2) = 2 - \theta(\bs_1, \bs_2)$, and by the Dominated Convergence Theorem, we have $\chi(\bs_1, \bs_2) = {\rm E}\{\underset{z \rightarrow \infty}{\lim}  \chi_z(\bs_1, \bs_2)|K_l(\bs)\}= 2 - {\rm E}\{\theta_2(\bs_1, \bs_2)\mid K_l(\bs)\} = 2 - {\rm E}\{\sum_{l = 1}^L\left[K_l(\bs_1)^{1/\alpha} + K_l(\bs_2)^{1/\alpha}\right]^\alpha\} > 0$ when $\alpha < 1$ for all $\bs_1, \bs_2 \in \mathcal{S}$. So, when $\theta = 0$ and $\alpha < 1$, $Z(\bs)$ is asymptotically dependent, both conditionally on $K_l, l = 1, \ldots, L$ and unconditionally.

\section*{Acknowledgement}
The authors gratefully acknowledge the support of NSF grant DMS-1752280 as well as seed grants from the Institute for CyberScience and the Institute for Energy and the Environment at Pennsylvania State University.  Computations for this research were performed on the Pennsylvania State University’s Institute for CyberScience Advanced CyberInfrastructure (ICS-ACI). This content is solely the responsibility of the authors and does not necessarily represent the views of the Institute for CyberScience.

\bibliographystyle{chicago}
\bibliography{bib}

\begin{thebibliography}{}

\bibitem[\protect\citeauthoryear{Balkema and Resnick}{Balkema and
  Resnick}{1977}]{Balkema77}
Balkema, A.~A. and S.~I. Resnick (1977).
\newblock Max-infinite divisibility.
\newblock {\em Journal of Applied Probability\/}~{\em 14\/}(2), 309--319.

\bibitem[\protect\citeauthoryear{Brown and Resnick}{Brown and
  Resnick}{1977}]{Brown77}
Brown, B.~M. and S.~I. Resnick (1977).
\newblock Extreme values of independent stochastic processes.
\newblock {\em Journal of Applied Probability\/}~{\em 14\/}(4), 732--739.

\bibitem[\protect\citeauthoryear{Castro~Camilo and Huser}{Castro~Camilo and
  Huser}{2018}]{Castro18}
Castro~Camilo, D. and R.~Huser (2018).
\newblock {Local likelihood estimation of complex tail dependence structures,
  applied to U.S. precipitation extremes}.
\newblock {arXiv preprint 1710.00875}.

\bibitem[\protect\citeauthoryear{Castruccio, Huser, and Genton}{Castruccio
  et~al.}{2016}]{Castruccio16}
Castruccio, S., R.~Huser, and M.~G. Genton (2016).
\newblock High-order composite likelihood inference for max-stable
  distributions and processes.
\newblock {\em Journal of Computational and Graphical Statistics\/}~{\em
  25\/}(4), 1212--1229.

\bibitem[\protect\citeauthoryear{Coles, Heffernan, and Tawn}{Coles
  et~al.}{1999}]{Coles99}
Coles, S., J.~Heffernan, and J.~Tawn (1999).
\newblock Dependence measures for extreme value analyses.
\newblock {\em Extremes\/}~{\em 2\/}(4), 339.

\bibitem[\protect\citeauthoryear{Cooley, Nychka, and Naveau}{Cooley
  et~al.}{2007}]{Cooley07}
Cooley, D., D.~Nychka, and P.~Naveau (2007).
\newblock Bayesian spatial modeling of extreme precipitation return levels.
\newblock {\em Journal of the American Statistical Association\/}~{\em
  102\/}(479), 824--840.

\bibitem[\protect\citeauthoryear{Cooley and Thibaud}{Cooley and
  Thibaud}{2018}]{Cooley18}
Cooley, D. and E.~Thibaud (2018).
\newblock Decompositions of dependence for high-dimensional extremes.
\newblock {\em arXiv preprint 1612.07190\/}.

\bibitem[\protect\citeauthoryear{Crowder}{Crowder}{1989}]{Crowder89}
Crowder, M. (1989).
\newblock A multivariate distribution with weibull connections.
\newblock {\em Journal of the Royal Statistical Society: Series B
  (Methodological)\/}~{\em 51\/}(1), 93--107.

\bibitem[\protect\citeauthoryear{Davison and Huser}{Davison and
  Huser}{2015}]{Davison15}
Davison, A.~C. and R.~Huser (2015).
\newblock Statistics of extremes.
\newblock {\em Annual Review of Statistics and Its Application\/}~{\em 2},
  203--235.

\bibitem[\protect\citeauthoryear{Davison, Huser, and Thibaud}{Davison
  et~al.}{2013}]{Davison13}
Davison, A.~C., R.~Huser, and E.~Thibaud (2013).
\newblock Geostatistics of dependent and asymptotically independent extremes.
\newblock {\em Mathematical Geosciences\/}~{\em 45\/}(5), 511--529.

\bibitem[\protect\citeauthoryear{Davison, Huser, and Thibaud}{Davison
  et~al.}{2019}]{Davison19}
Davison, A.~C., R.~Huser, and E.~Thibaud (2019).
\newblock Spatial extremes.
\newblock In A.~E. Gelfand, M.~Fuentes, and R.~L. Smith (Eds.), {\em Handbook
  of Environmental and Ecological Statistics}, pp.\  711--744. CRC Press.

\bibitem[\protect\citeauthoryear{Davison, Padoan, and Ribatet}{Davison
  et~al.}{2012}]{Davison12}
Davison, A.~C., S.~A. Padoan, and M.~Ribatet (2012).
\newblock Statistical modeling of spatial extremes.
\newblock {\em Statistical Science. A Review Journal of the Institute of
  Mathematical Statistics\/}~{\em 27\/}(2), 161--186.

\bibitem[\protect\citeauthoryear{de~Haan and Ferreira}{de~Haan and
  Ferreira}{2006}]{deHaan07}
de~Haan, L. and A.~Ferreira (2006).
\newblock {\em Extreme value theory}.
\newblock Springer Series in Operations Research and Financial Engineering.
  Springer, New York.
\newblock An introduction.

\bibitem[\protect\citeauthoryear{Demsar, Harris, Brunsdon, Fotheringham, and
  McLoone}{Demsar et~al.}{2013}]{Demsar13}
Demsar, U., P.~Harris, C.~Brunsdon, A.~S. Fotheringham, and S.~McLoone (2013).
\newblock Principal component analysis on spatial data: an overview.
\newblock {\em Annals of the Association of American Geographers\/}~{\em
  103\/}(1), 106--128.

\bibitem[\protect\citeauthoryear{Devroye}{Devroye}{2009}]{Devroye09}
Devroye, L. (2009).
\newblock Random variate generation for exponentially and polynomially tilted
  stable distributions.
\newblock {\em ACM Transactions on Modeling and Computer Simulation
  (TOMACS)\/}~{\em 19\/}(4), 18.

\bibitem[\protect\citeauthoryear{Dombry, Engelke, and Oesting}{Dombry
  et~al.}{2017}]{Dombry17}
Dombry, C., S.~Engelke, and M.~Oesting (2017).
\newblock Bayesian inference for multivariate extreme value distributions.
\newblock {\em Electronic Journal of Statistics\/}~{\em 11\/}(2), 4813--4844.

\bibitem[\protect\citeauthoryear{Dombry, \'Eyi-Minko, and Ribatet}{Dombry
  et~al.}{2013}]{Dombry13}
Dombry, C., F.~\'Eyi-Minko, and M.~Ribatet (2013).
\newblock Conditional simulation of max-stable processes.
\newblock {\em Biometrika\/}~{\em 100\/}(1), 111--124.

\bibitem[\protect\citeauthoryear{Ferreira and de~Haan}{Ferreira and
  de~Haan}{2014}]{Ferreira14}
Ferreira, A. and L.~de~Haan (2014).
\newblock The generalized {P}areto process; with a view towards application and
  simulation.
\newblock {\em Bernoulli. Official Journal of the Bernoulli Society for
  Mathematical Statistics and Probability\/}~{\em 20\/}(4), 1717--1737.

\bibitem[\protect\citeauthoryear{Foug\`eres, Nolan, and Rootz\'en}{Foug\`eres
  et~al.}{2009}]{Fougeres09}
Foug\`eres, A.-L., J.~P. Nolan, and H.~Rootz\'en (2009).
\newblock Models for dependent extremes using stable mixtures.
\newblock {\em Scandinavian Journal of Statistics. Theory and
  Applications\/}~{\em 36\/}(1), 42--59.

\bibitem[\protect\citeauthoryear{Gneiting and Raftery}{Gneiting and
  Raftery}{2007}]{Gneiting07}
Gneiting, T. and A.~E. Raftery (2007).
\newblock Strictly proper scoring rules, prediction, and estimation.
\newblock {\em Journal of the American Statistical Association\/}~{\em
  102\/}(477), 359--378.

\bibitem[\protect\citeauthoryear{Hannachi, Jolliffe, and Stephenson}{Hannachi
  et~al.}{2007}]{Hannachi07}
Hannachi, A., I.~Jolliffe, and D.~Stephenson (2007).
\newblock Empirical orthogonal functions and related techniques in atmospheric
  science: A review.
\newblock {\em International journal of climatology\/}~{\em 27\/}(9),
  1119--1152.

\bibitem[\protect\citeauthoryear{Hougaard}{Hougaard}{1986}]{Hougaard86}
Hougaard, P. (1986).
\newblock Survival models for heterogeneous populations derived from stable
  distributions.
\newblock {\em Biometrika\/}~{\em 73\/}(2), 387--396.

\bibitem[\protect\citeauthoryear{Huser and Davison}{Huser and
  Davison}{2014}]{Huser14}
Huser, R. and A.~C. Davison (2014).
\newblock Space-time modelling of extreme events.
\newblock {\em Journal of the Royal Statistical Society. Series B. Statistical
  Methodology\/}~{\em 76\/}(2), 439--461.

\bibitem[\protect\citeauthoryear{Huser, Dombry, Ribatet, and Genton}{Huser
  et~al.}{2019}]{Huser19}
Huser, R., C.~Dombry, M.~Ribatet, and M.~G. Genton (2019).
\newblock Full likelihood inference for max-stable data.
\newblock {\em Stat\/}~{\em 8}, e218.

\bibitem[\protect\citeauthoryear{Huser, Opitz, and Thibaud}{Huser
  et~al.}{2017}]{Huser17}
Huser, R., T.~Opitz, and E.~Thibaud (2017).
\newblock Bridging asymptotic independence and dependence in spatial extremes
  using {G}aussian scale mixtures.
\newblock {\em Spatial Statistics\/}~{\em 21\/}(part A), 166--186.

\bibitem[\protect\citeauthoryear{Huser, Opitz, and Thibaud}{Huser
  et~al.}{2018}]{Huser18}
Huser, R., T.~Opitz, and E.~Thibaud (2018).
\newblock Max-infinitely divisible models and inference for spatial extremes.
\newblock arXiv preprint 1801.02946.

\bibitem[\protect\citeauthoryear{Huser and Wadsworth}{Huser and
  Wadsworth}{2019}]{HuserWadsworth19}
Huser, R. and J.~L. Wadsworth (2019).
\newblock Modeling spatial processes with unknown extremal dependence class.
\newblock {\em Journal of the American Statistical Association\/}~{\em 114},
  434--444.

\bibitem[\protect\citeauthoryear{Jolliffe}{Jolliffe}{2002}]{Jolliffe02}
Jolliffe, I.~T. (2002).
\newblock {\em Principal component analysis\/} (Second ed.).
\newblock Springer Series in Statistics. Springer-Verlag, New York.

\bibitem[\protect\citeauthoryear{Kabluchko and Schlather}{Kabluchko and
  Schlather}{2010}]{Kabluchko10}
Kabluchko, Z. and M.~Schlather (2010).
\newblock Ergodic properties of max-infinitely divisible processes.
\newblock {\em Stochastic Processes and their Applications\/}~{\em 120\/}(3),
  281--295.

\bibitem[\protect\citeauthoryear{Kabluchko, Schlather, and de~Haan}{Kabluchko
  et~al.}{2009}]{Kabluchko09}
Kabluchko, Z., M.~Schlather, and L.~de~Haan (2009).
\newblock Stationary max-stable fields associated to negative definite
  functions.
\newblock {\em The Annals of Probability\/}~{\em 37\/}(5), 2042--2065.

\bibitem[\protect\citeauthoryear{Kabluchko and Stoev}{Kabluchko and
  Stoev}{2016}]{Kabluchko16}
Kabluchko, Z. and S.~Stoev (2016).
\newblock Stochastic integral representations and classification of sum- and
  max-infinitely divisible processes.
\newblock {\em Bernoulli. Official Journal of the Bernoulli Society for
  Mathematical Statistics and Probability\/}~{\em 22\/}(1), 107--142.

\bibitem[\protect\citeauthoryear{Kwiatkowski, Phillips, Schmidt, and
  Shin}{Kwiatkowski et~al.}{1992}]{Kwiatkowski92}
Kwiatkowski, D., P.~C. Phillips, P.~Schmidt, and Y.~Shin (1992).
\newblock Testing the null hypothesis of stationarity against the alternative
  of a unit root: How sure are we that economic time series have a unit root?
\newblock {\em Journal of econometrics\/}~{\em 54\/}(1-3), 159--178.

\bibitem[\protect\citeauthoryear{Marin, Mengersen, and Robert}{Marin
  et~al.}{2005}]{Marin05}
Marin, J.-M., K.~Mengersen, and C.~P. Robert (2005).
\newblock Bayesian modelling and inference on mixtures of distributions.
\newblock In {\em Bayesian thinking: modeling and computation}, Volume~25 of
  {\em Handbook of Statistics}, pp.\  459--507. Elsevier/North-Holland,
  Amsterdam.

\bibitem[\protect\citeauthoryear{Marin and Robert}{Marin and
  Robert}{2007}]{Marin07}
Marin, J.-M. and C.~P. Robert (2007).
\newblock {\em Bayesian core: a practical approach to computational {B}ayesian
  statistics}.
\newblock Springer Texts in Statistics. Springer, New York.

\bibitem[\protect\citeauthoryear{Morris}{Morris}{2016}]{Morris16}
Morris, S.~A. (2016).
\newblock {\em Spatial {M}ethods for {M}odeling {E}xtreme and {R}are {E}vents}.
\newblock Thesis (Ph.D.) North Carolina State University.

\bibitem[\protect\citeauthoryear{Opitz, Huser, Bakka, and Rue}{Opitz
  et~al.}{2018}]{Opitz18}
Opitz, T., R.~Huser, H.~Bakka, and H.~Rue (2018).
\newblock {INLA goes extreme: Bayesian tail regression for the estimation of
  high spatio-temporal quantiles}.
\newblock {\em Extremes\/}~{\em 21\/}(3), 441--462.

\bibitem[\protect\citeauthoryear{Padoan}{Padoan}{2013}]{Padoan13}
Padoan, S.~A. (2013).
\newblock Extreme dependence models based on event magnitude.
\newblock {\em Journal of Multivariate Analysis\/}~{\em 122}, 1--19.

\bibitem[\protect\citeauthoryear{Perica, Martin, Pavlovic, Roy, Laurent,
  Trypaluk, Unruh, Yekta, and Bonnin}{Perica et~al.}{2013}]{Perica13}
Perica, S., D.~Martin, S.~Pavlovic, I.~Roy, M.~S. Laurent, C.~Trypaluk,
  D.~Unruh, M.~Yekta, and G.~Bonnin (2013).
\newblock Noaa atlas 14 volume 9 version 2, precipitation-frequency atlas of
  the united states, southeastern states.
\newblock {\em NOAA, National Weather Service\/}~{\em 9}, 18.

\bibitem[\protect\citeauthoryear{Reich and Shaby}{Reich and
  Shaby}{2012}]{Reich12}
Reich, B.~J. and B.~A. Shaby (2012).
\newblock A hierarchical max-stable spatial model for extreme precipitation.
\newblock {\em The Annals of Applied Statistics\/}~{\em 6\/}(4), 1430--1451.

\bibitem[\protect\citeauthoryear{Resnick}{Resnick}{1987}]{Resnick87}
Resnick, S.~I. (1987).
\newblock {\em Extreme values, regular variation, and point processes},
  Volume~4 of {\em Applied Probability. A Series of the Applied Probability
  Trust}.
\newblock Springer-Verlag, New York.

\bibitem[\protect\citeauthoryear{Robins, van~der Vaart, and Ventura}{Robins
  et~al.}{2000}]{Robins00}
Robins, J.~M., A.~van~der Vaart, and V.~Ventura (2000).
\newblock Asymptotic distribution of {$p$} values in composite null models.
\newblock {\em J. Amer. Statist. Assoc.\/}~{\em 95\/}(452), 1143--1167,
  1171--1172.
\newblock With comments and a rejoinder by the authors.

\bibitem[\protect\citeauthoryear{Sang and Gelfand}{Sang and
  Gelfand}{2010}]{Sang10}
Sang, H. and A.~E. Gelfand (2010).
\newblock Continuous spatial process models for spatial extreme values.
\newblock {\em Journal of Agricultural, Biological, and Environmental
  Statistics\/}~{\em 15\/}(1), 49--65.

\bibitem[\protect\citeauthoryear{Schlather and Tawn}{Schlather and
  Tawn}{2003}]{Schlather03}
Schlather, M. and J.~A. Tawn (2003).
\newblock A dependence measure for multivariate and spatial extreme values:
  properties and inference.
\newblock {\em Biometrika\/}~{\em 90\/}(1), 139--156.

\bibitem[\protect\citeauthoryear{Stephenson}{Stephenson}{2009}]{Stephenson09}
Stephenson, A.~G. (2009).
\newblock High-dimensional parametric modelling of multivariate extreme events.
\newblock {\em Australian \& New Zealand Journal of Statistics\/}~{\em
  51\/}(1), 77--88.

\bibitem[\protect\citeauthoryear{Suro, Firda, and Szabo}{Suro
  et~al.}{2009}]{Suro09}
Suro, T.~P., G.~D. Firda, and C.~O. Szabo (2009).
\newblock Flood of june 26-29, 2006, mohawk, delaware, and susquehanna river
  basins, new york.
\newblock {\em US Geological Survey Open-File Report\/}~{\em 1063}, 354.

\bibitem[\protect\citeauthoryear{Tawn}{Tawn}{1990}]{Tawn90}
Tawn, J.~A. (1990).
\newblock Modelling multivariate extreme value distributions.
\newblock {\em Biometrika\/}~{\em 77\/}(2), 245--253.

\bibitem[\protect\citeauthoryear{Thibaud and Opitz}{Thibaud and
  Opitz}{2015}]{Thibaud15}
Thibaud, E. and T.~Opitz (2015).
\newblock Efficient inference and simulation for elliptical {P}areto processes.
\newblock {\em Biometrika\/}~{\em 102\/}(4), 855--870.

\bibitem[\protect\citeauthoryear{Wadsworth and Tawn}{Wadsworth and
  Tawn}{2012}]{Wadsworth12}
Wadsworth, J.~L. and J.~A. Tawn (2012).
\newblock Dependence modelling for spatial extremes.
\newblock {\em Biometrika\/}~{\em 99\/}(2), 253--272.

\end{thebibliography}


\begin{thebibliography}{}

\bibitem[\protect\citeauthoryear{Balkema and Resnick}{Balkema and
  Resnick}{1977}]{Balkema77}
Balkema, A.~A. and S.~I. Resnick (1977).
\newblock Max-infinite divisibility.
\newblock {\em Journal of Applied Probability\/}~{\em 14\/}(2), 309--319.

\bibitem[\protect\citeauthoryear{Devroye}{Devroye}{2009}]{Devroye09}
Devroye, L. (2009).
\newblock Random variate generation for exponentially and polynomially tilted
  stable distributions.
\newblock {\em ACM Transactions on Modeling and Computer Simulation
  (TOMACS)\/}~{\em 19\/}(4), 18.

\bibitem[\protect\citeauthoryear{Huser, Opitz, and Thibaud}{Huser
  et~al.}{2018}]{Huser18}
Huser, R., T.~Opitz, and E.~Thibaud (2018).
\newblock Max-infinitely divisible models and inference for spatial extremes.
\newblock arXiv preprint 1801.02946.

\bibitem[\protect\citeauthoryear{Ibragimov and Chernin}{Ibragimov and
  Chernin}{1959}]{Chernin59}
Ibragimov, I.~A. and K.~E. Chernin (1959).
\newblock On the unimodality of stable laws.
\newblock {\em Theory of Probability and its Applications\/}~{\em 4}, 417--419.

\bibitem[\protect\citeauthoryear{Kanter}{Kanter}{1975}]{Kanter75}
Kanter, M. (1975).
\newblock Stable densities under change of scale and total variation
  inequalities.
\newblock {\em The Annals of Probability\/}~{\em 3\/}(4), 697--707.

\bibitem[\protect\citeauthoryear{Kuss and Rasmussen}{Kuss and
  Rasmussen}{2005}]{Kuss05}
Kuss, M. and C.~E. Rasmussen (2005).
\newblock Assessing approximate inference for binary {G}aussian process
  classification.
\newblock {\em Journal of Machine Learning Research (JMLR)\/}~{\em 6},
  1679--1704.

\bibitem[\protect\citeauthoryear{Resnick}{Resnick}{1987}]{Resnick87}
Resnick, S.~I. (1987).
\newblock {\em Extreme values, regular variation, and point processes},
  Volume~4 of {\em Applied Probability. A Series of the Applied Probability
  Trust}.
\newblock Springer-Verlag, New York.

\bibitem[\protect\citeauthoryear{Shaby and Wells}{Shaby and
  Wells}{2010}]{Shaby10}
Shaby, B. and M.~Wells (2010).
\newblock Exploring an adaptive {M}etropolis algorithm.
\newblock Technical Report 1011-14, Duke University Department of Stastical
  Science.

\bibitem[\protect\citeauthoryear{Stephenson}{Stephenson}{2009}]{Stephenson09}
Stephenson, A.~G. (2009).
\newblock High-dimensional parametric modelling of multivariate extreme events.
\newblock {\em Australian \& New Zealand Journal of Statistics\/}~{\em
  51\/}(1), 77--88.

\end{thebibliography}

\end{document}

% --- supplement: supplement.tex ---

\maketitle

\section{ASYMPTOTIC INDEPENDENCE OF THE MAX-ID MODEL}
\label{sec:properties_of_the_max_id_model}
Since the marginal distributions of $Z(\bs)$ are the same when constructed using the log-Gaussian process basis, $Z(\bs_1)$ and $Z(\bs_2)$ are asymptotically independent if $\pr(Z(\bs_1) >z|Z(\bs_2)>z) \rightarrow 0$ as $z \rightarrow \infty$. The marginal distribution of the process at location $\bs$ conditional on the basis functions is
\[
G_{\bs}\{z|K_l(\bs), l = 1,\ldots L\} = \exp\left(L \theta^\alpha - \sum_{l= 1}^L\left[\theta + \left\{{K_l(\bs)\over z}\right\}^{1/\alpha}\right]^\alpha\right),
\]
and the joint distribution at two locations $\bs_1$ and $\bs_2$ is
\[
\pr\{Z(\bs_1) \leq z_1, Z(\bs_2)\leq z_2|K_l(\bs), l = 1,\ldots L\} = \exp\left(L \theta^\alpha - \sum_{l= 1}^L\left[\theta + \left\{{K_l(\bs_1)\over z_1}\right\}^{1/\alpha} + \left\{{K_l(\bs_2)\over z_2}\right\}^{1/\alpha}\right]^\alpha\right).
\]
For brevity, we will drop the indices $l = 1,\ldots L$, and write, e.g., $G_{\bs}\{z|K_l(\bs)\} \equiv G_{\bs}\{z|K_l(\bs), l = 1,\ldots L\}$. We have that $\underset{z \rightarrow \infty}{\lim}1 - G_{\bs_1}\{z|K_l(\bs)\}-G_{\bs_2}\{z|K_l(\bs)\} + \Pr\{Z(\bs_1)\leq z, Z(\bs_2)\leq z|K_l(\bs)\} =0$, and $\underset{z \rightarrow \infty}{\lim} 1-G_{\bs_2}\{z|K_l(\bs)\} = 0$. Thus, by $L'Hospital$'s rule, we obtain
\begin{align}
\chi(\bs_1, \bs_2)|K_l(\bs)  &= \underset{z \rightarrow \infty}{\lim} \Pr\{Z(\bs_1) >z| Z(\bs_2)>z, K_l(\bs)\} \nonumber \\
&= \underset{z \rightarrow \infty}{\lim} {1 - G_{\bs_1}\{z|K_l(\bs)\} - G_{\bs_2}\{z|K_l(\bs)\} + \Pr\{Z(\bs_1)\leq z, Z(\bs_2)\leq z|K_l(\bs)\}\over 1 - G_{\bs_2}\{z|K_l(\bs)\}} \nonumber\\ 
&= \underset{z \rightarrow \infty}{\lim}{{{\rm d}\over {\rm d}z}[1 - G_{\bs_2}\{z|K_l(\bs)\}-G_{\bs_1}\{z|K_l(\bs)\} + \Pr\{Z(\bs_1)\leq z, Z(\bs_2)\leq z|K_l(\bs)\}]\over {{\rm d}\over {\rm d}z}[1 - G_{\bs_2}\{z|K_l(\bs)\}]} \nonumber\\
&= 1 +\underset{z \rightarrow \infty}{\lim} {{{\rm d}\over {\rm d}z} G_{\bs_1}\{z|K_l(\bs)\}\over{{\rm d}\over {\rm d}z}G_{\bs_2}\{z|K_l(\bs)\}} - \underset{z \rightarrow \infty}{\lim} {{{\rm d}\over {\rm d}z}\Pr\{Z(\bs_1)\leq z, Z(\bs_2)\leq z|K_l(\bs)\}\over{{\rm d} \over {\rm d}z}G_{\bs_2}\{z|K_l(\bs)\}}. \label{eq1}
\end{align}
Now for $\theta >0$, the second term in (\ref{eq1}) becomes
\begin{align}
\underset{z \rightarrow \infty}{\lim} {{{\rm d}\over {\rm d}z}G_{\bs_1}\{z|K_l(\bs)\}\over{{\rm d}\over {\rm d}z}G_{\bs_2}\{z|K_l(\bs)\}}  
&= \underset{z \rightarrow \infty}{\lim}  {G_{\bs_1}\{z|K_l(\bs)\}\left(\alpha z^{-(1+1/\alpha)}\sum_{l= 1}^L\left[\theta + \left\{{K_l(\bs_1)\over z}\right\}^{1/\alpha}\right]^{\alpha-1}K_l(\bs_1)^{1/\alpha}\right)\over G_{\bs_2}\{z|K_l(\bs)\}\left(\alpha z^{-(1+1/\alpha)}\sum_{l= 1}^L\left[\theta + \left\{{K_l(\bs_2)\over z}\right\}^{1/\alpha}\right]^{\alpha-1}K_l(\bs_2)^{1/\alpha}\right)}\nonumber \\
&= \underset{z \rightarrow \infty}{\lim} {G_{\bs_1}\{z|K_l(\bs)\}\over G_{\bs_2}\{z|K_l(\bs)\}}\times \underset{z \rightarrow \infty}{\lim} {\sum_{l= 1}^L\left[\theta + \left\{{K_l(\bs_1)\over z}\right\}^{1/\alpha}\right]^{\alpha-1}K_l(\bs_1)^{1/\alpha}\over \sum_{l= 1}^L\left[\theta + \left\{{K_l(\bs_2)\over z}\right\}^{1/\alpha}\right]^{\alpha-1}K_l(\bs_2)^{1/\alpha}}\nonumber\\
&= {\sum_{l= 1}^LK_l(\bs_1)^{1/\alpha}\over \sum_{l= 1}^L K_l(\bs_2)^{1/\alpha}},\label{eq:firstpart}
\end{align}
and the third term in (\ref{eq1}) becomes
{\footnotesize\begin{align}
&\underset{z \rightarrow \infty}{\lim} {{{\rm d}\over {\rm d}z}\Pr\{Z(\bs_1)\leq z, Z(\bs_2)\leq z|K_l(\bs)\}\over {{\rm d}\over {\rm d}z}G_{\bs_2}\{z|K_l(\bs)\}}\nonumber\\
&= \underset{z \rightarrow \infty}{\lim} {\Pr\{Z(\bs_1)\leq z, Z(\bs_2)\leq z|K_l(\bs)\}
\left(\alpha z^{-(1+1/\alpha)}\sum_{l= 1}^L\left[\theta + \left\{{K_l(\bs_1)\over z}\right\}^{1/\alpha} + \left\{{K_l(\bs_2)\over z}\right\}^{1/\alpha}\right]^{\alpha-1}\left\{K_l(\bs_1)^{1/\alpha} + K_l(\bs_2)^{1/\alpha}\right\}\right)
\over G_{\bs_2}\{z|K_l(\bs)\}\left(\alpha z^{-(1+1/\alpha)}\sum_{l= 1}^L\left[\theta + \left\{{K_l(\bs_2)\over z}\right\}^{1/\alpha}\right]^{\alpha-1}K_l(\bs_2)^{1/\alpha}\right)} \nonumber\\
&= \underset{z \rightarrow \infty}{\lim} {\Pr\{Z(\bs_1)\leq z, Z(\bs_2)\leq z|K_l(\bs)\}\over G_{\bs_2}\{z|K_l(\bs)\}}
\underset{z \rightarrow \infty}{\lim} {
\sum_{l= 1}^L\left[\theta + \left\{{K_l(\bs_1)\over z}\right\}^{1/\alpha} + \left\{{K_l(\bs_2)\over z}\right\}^{1/\alpha}\right]^{\alpha-1}\left\{K_l(\bs_1)^{1/\alpha} + K_l(\bs_2)^{1/\alpha}\right\}
\over \sum_{l= 1}^L\left[\theta + \left\{{K_l(\bs_2)\over z}\right\}^{1/\alpha}\right]^{\alpha-1}K_l(\bs_2)^{1/\alpha}} \nonumber\\
&= {\sum_{l= 1}^LK_l(\bs_1)^{1/\alpha} + K_l(\bs_2)^{1/\alpha}\over \sum_{l= 1}^L K_l(\bs_2)^{1/\alpha}}\nonumber\\
&= 1 + {\sum_{l= 1}^LK_l(\bs_1)^{1/\alpha}\over \sum_{l= 1}^L K_l(\bs_2)^{1/\alpha}}.\label{eq:secondpart}
\end{align}}
Combining \eqref{eq:firstpart} and \eqref{eq:secondpart} together in (\ref{eq1}) gives $\chi(\bs_1, \bs_2)|K_l(\bs)= 0$ for all $\bs_1, \bs_2$ when $\theta >0$.

Finally, writing $\chi_z(\bs_1, \bs_2)|K_l(\bs)  =  \Pr\{Z(\bs_1) >z| Z(\bs_2)>z, K_l(\bs)\}$, we obtain
\begin{equation*}
\chi(\bs_1, \bs_2) = \underset{z \rightarrow \infty}{\lim} \chi_z(\bs_1, \bs_2) = \underset{z \rightarrow \infty}{\lim} {\rm E}\{\chi_z(\bs_1, \bs_2)|K_l(\bs)\}= {\rm E}\{\underset{z \rightarrow \infty}{\lim}  \chi_z(\bs_1, \bs_2)|K_l(\bs)\}= 0
\end{equation*}
for all $\bs_1, \bs_2 \in \mathcal{S}$, where the second to last line follows from the Dominated Convergence Theorem since $|\chi_z(\bs_1, \bs_2)|K_l(\bs)| < 1$.

\section{SPECTRAL REPRESENTATION} % (fold)
\label{sec:spectral_representation}
The finite-dimensional distributions of $\{Z(\bs), \bs\in \mathcal{S}\}$ defined as in (9) %\eqref{eq:jointdistrMaxIDReichShaby} 
of the main text with $A_1,\ldots,A_L\iid {\rm H}(\alpha,\alpha,\theta)$, $\alpha\in(0,1)$, $\theta\geq0$, have the following spectral representation: let $\mathbf{X}_i=(X_{i,1}, \ldots, X_{i,D})^\top\in [0,\infty]^D\setminus\{\mathbf{0}\}$, $i=1,2,\ldots$, be points from a Poisson process with mean measure defined as
\begin{equation}\label{PPPmeasure}
\Lambda\left([0,\infty]^D\setminus[0,x_1]\times\ldots\times [0,x_D]\right) = \sum_{l = 1}^L\left[\theta + \sum_{j = 1}^D\{{x_j/K_l(\bs_j)}\}^{-1/\alpha}\right]^\alpha - L \theta^\alpha \geq 0.
\end{equation}
We note that the right-hand side of \eqref{PPPmeasure} is non-negative and monotone decreasing in each argument $x_j$, $j=1,\ldots,D$; therefore, $\Lambda$, endowed with the sigma-algebra of Borel sets, defines a valid Radon measure on $[0,\infty]^D\setminus\{\mathbf{0}\}$. Then, we construct the vector $\mathbf{Z}=\{Z(\bs_1),\ldots,Z(\bs_D)\}^\top$ as
\begin{equation}\label{PPP}
\mathbf{Z}=\max(\mathbf{0},\max_{i=1,2,\ldots}\mathbf{X}_i)=\max_{i=1,2,\ldots}\mathbf{X}_i,
\end{equation}
where the maximum is taken componentwise. The right-most equality in \eqref{PPP} follows from the fact that $\Lambda$ is infinite, i.e., $\Lambda\left([0,\infty]^D\setminus\{\mathbf{0}\}\right)=\infty$, which implies that an infinite number of Poisson points are sampled in $[0,\infty]^D\setminus\{\mathbf{0}\}$, and that the lower boundary $\mathbf{0}$ in \eqref{PPP} does not contribute to the maximum. From the spectral representation \eqref{PPP}, combined with \eqref{PPPmeasure}, we can check that we indeed recover the expression for our max-id model in (9) of the main text: % \eqref{eq:jointdistrMaxIDReichShaby}:
\begin{align*}
\pr\left\{Z(\bs_1) \leq z_1, \ldots, Z(\bs_D)\leq z_D\right\} &= \pr\left(X_{i,1} \leq z_1, \ldots, X_{i,D}\leq z_D, \, i=1,2,\ldots\right) \\
&=\pr\left(\text{No point } \mathbf{X}_i \text{ in } [0,\infty]^D\setminus[0,z_1]\times,\ldots, \times [0,z_D] \right) \\
&= \exp\left(-\Lambda\{[0,\infty]^D\setminus[0,z_1]\times\ldots\times [0,z_D]\}\right)\\
&= \exp\left(L\theta^\alpha-\sum_{l = 1}^L\left[\theta + \sum_{j = 1}^D\{{z_j/K_l(\bs_j)}\}^{-1/\alpha}\right]^\alpha\right).
\end{align*}
The spectral representation \eqref{PPP} confirms that our model is indeed conditionally max-id, given the basis functions; see \citet{Resnick87}, Chapter 5, for more details on the spectral characterization of max-id random vectors. Such a characterization was recently exploited by \citet{Huser18} to construct alternative max-id models, which are quite flexible but much more intensive to fit than our proposed model.

\section{SIMULATING HOUGAARD RANDOM VARIABLES}
\label{sec:simulating_hougaard_random_variables}
Simulating from an exponentially tilted density using a simple rejection sampler becomes increasingly difficult as the exponential tilting parameter grows. The expected number of iterations for the following simple rejection sampler grows according to $1/{\rm E}\{\exp(-\theta X_1)\}$ if $X_1$ has density $f$ and $X_2$ follows an exponentially tilted $f$ with tilting parameter $\theta > 0$:
\begin{algorithmic}[1]
\Repeat 
\State Generate $X_1 \sim f$
\State Generate $U \sim {\rm Unif}(0,1)$
\Until{$U \leq \exp(-\theta X_1)$}
\State Set $X_2 = X_1$.
\end{algorithmic}
\cite{Devroye09} developed a fast double rejection method that is uniformly fast over all tilting parameters, which is useful when $\theta$ is large. For the proposed model, the more interesting values of $\theta$ occur when $\theta$ is small, for which the simple rejection sampler is sufficient. 

\section{MCMC DETAILS}
\label{sec:mcmc_details}
%\subsection{Metropolis-Hastings Algorithm} % (fold)
%\label{sub:metropolis_hastings_algorithm}

We implement Metropolis-Hastings MCMC algorithms in R (\url{http://www.r-project.org}) to draw posterior samples for the models described in this paper. Parameters $\alpha$, $\theta$, $\tau$, $\beta_\psi$, $\delta_\psi^2$, $\rho_\psi,$ $\psi \in \{\mu, \sigma\}$, $\xi$, $\delta_K^2$, and $\rho_K$ are all updated using variable-at-a-time Normal random walks. Scaling factors $A_{l,t}$ are updated using variable-at-a-time Normal random walks on a log-scale. Since we assume independence of the processes in different years, these basis scaling factors for different years may be updated in parallel. When $\theta > 0$, the density for $A_{l,t}$ can be expressed in terms of a $\mathrm{PS}(\alpha)$ density. The density function for the positive stable distribution is not available in closed form, but has a convenient integral representation \citep{Stephenson09,Kanter75,Chernin59}, which we evaluate numerically. For $\alpha \in (0,1)$, 
\[
f_{\rm PS}(x) = \int_0^1 {\alpha\over 1-\alpha} x^{-1/(1-\alpha)}a(\pi u)\exp\left\{-x^{-\alpha/(1-\alpha)}a(\pi u)\right\} {\rm d}u, 
\]
where
\[
a(v) = \left\{{\sin(\alpha v)\over\sin(v)}\right\}^{1/(1-\alpha)} {\sin\{(1-\alpha)v\}\over\sin(\alpha v)}.
\]
%Updates for the $A_{l,t}$ are made using variable-at-a-time Normal random walk updates on a log-scale.

The computational burden associated with the log-Gaussian process basis functions and GEV marginal parameter Gaussian processes is one of the limiting factors in the scaling of the proposed model to many spatial locations. Often the strong correlations in the posterior distribution of a Gaussian process at nearby spatial locations make finding an efficient proposal difficult. To address this, we adapt a common sampling scheme that uses a Cholesky decomposition of the covariance matrix to transform the highly correlated Gaussian process to an i.i.d.\ scale \citep{Kuss05}. If $X$ is multivariate normal, $X\sim N(\mathbf{0}, \Sigma)$, then for Cholesky decomposition $\Sigma = LL^\top$, setting $Y = L^{-1}X$ gives $Y\sim N(\mathbf{0}, I)$. Performing Cholesky decompositions on large matrices at every MCMC iteration can quickly become expensive and is needless for our purposes if the range of the spatial dependence is not too large. With this in mind, we first partition our spatial region into disjoint sub-regions and perform block updates within each.  

We illustrate our proposal scheme for the GEV location parameter, but the same approach applies generically to others. Without loss of generality, assume that $\{\mu(\bs), \bs \in \mathbb{R}^2\}$ has mean zero Gaussian process prior with exponential covariance function $C(h;\rho_\mu, \delta^2_\mu)$, of which we would like to draw posterior samples. We first apply a $k$-nearest-neighbors clustering algorithm on the set of observation locations $\bs_1,\ldots, \bs_D$ to partition them into $J$ disjoint clusters, which we fix for the remainder of the algorithm. Let $\zeta(j)$ denote the set of $D_j$ spatial indices for the $j^{\mbox{\footnotesize th}}$ cluster. For the precipitation analysis, we take $J = 20$, however in general this choice should depend on the practical range of the process. Also, note that clustering on a spatially varying covariate instead of just the spatial locations could improve the efficiency of the sampler further.

Let $\bmu^{(m)}_j = \left[\mu\{\bs_{\zeta(j)_1}\}^{(m)}\ldots, \mu\{\bs_{\zeta(j)_{D_j}}\}^{(m)}\right]^\top$ be the $m^{\mbox{\footnotesize th}}$ MCMC draw of the process at the $D_j$ observation locations of the $j^{\mbox{\footnotesize th}}$ partition, and $\Sigma^{(m)}_{\mu_j}$ the corresponding covariance matrix obtained by evaluating $C(h;\rho^{(m)}_\mu, {\delta^2}^{(m)}_\mu)$ at all pairs of observation locations in the $j^{\mbox{\footnotesize th}}$ cluster, and $L_{\mu_j}$ its Cholesky factor. For each cluster, we perform block random-walk updates as follows:

\begin{algorithmic}[1]
\State Set $\breve{\bmu}^{(m)}_j = L_{\mu_j}^{-1}\bmu^{(m)}_j$
\State Propose $\breve{\bmu}^*_j \sim N(\breve{\bmu}^{(m)}_j, \lambda_{\mu_j}^2 I)$
\State Set $\bmu^*_j = L_{\mu_j} \breve{\bmu}^*_j$
\State Retain $\bmu^*_j$ with probability $R = \min\{1, f_\mu(\bmu^*_j|\Psi^{(m)}, \mathbf{z})/f_\mu(\bmu^{(m)}_j|\Psi^{(m)}, \mathbf{z})\}$,  
\end{algorithmic}

where $f_\mu$ is the full conditional distribution for $\bmu$ given the data $\mathbf{z}$ and remaining parameters $\Psi$, and the proposal variance $\lambda_{\mu_j}^2$ can be adaptively tuned \citep{Shaby10}.

The proposals for the Gaussian process basis functions are similar, but the blocking scheme also takes advantage of the dependence induced by the sum-to-one constraint. Let $\mathbf{K}^{(m)}_{l,j} = [K^{(m)}_l\{\bs_{\zeta(j)_1}\}, \ldots, \allowbreak K^{(m)}_l\{\bs_{\zeta(j)_{D_j}}\}]^\top$, $l =1, \ldots, L$, be the $m^{\mbox{\footnotesize th}}$ MCMC draw of the log-Gaussian process basis functions. For each spatial location $\bs$, we transform the basis functions to the Gaussian process scale $\tilde{K}^{(m)}_l(\bs) = \log\left\{K^{(m)}_l(\bs)/K^{(m)}_L(\bs)\right\}, \, l = 1, \ldots, L$. Then, just as before denoting the prior Gaussian process covariance matrix for the $j^{\mbox{\footnotesize th}}$ partition by the $\Sigma_{\tilde{K}_j}$ and corresponding Cholesky factor by $L_{\tilde{K}_j}$ (common across all $l = 1, \ldots, L-1$), the Metropolis update is
\begin{algorithmic}[1]
\For{$l = 1, \ldots, L-1$}
	\State Set $\breve{\mathbf{K}}^{(m)}_{l,j} = L_{\tilde{K}_j}^{-1}\tilde{\mathbf{K}}^{(m)}_{l,j}$ 
	\State Propose $\breve{\mathbf{K}}^{*}_{l,j} \sim N(\breve{\mathbf{K}}^{(m)}_{l,j}, \lambda_{K_j}^2 I)$
	\State Set $\tilde{\mathbf{K}}^{*}_{l,j} = L_{\tilde{K}_j} \breve{\mathbf{K}}^{*}_{l,j}$
\EndFor
\State Define $K^{*}_L(\bs) = 1$ for all $\bs$
\State Set $K^{*}_l(\bs_{\zeta(j)_i}) = \exp\left\{\tilde{K}^{*}_l(\bs_{\zeta(j)_i})\right\}/\sum_{l = 1}^L \exp\left\{\tilde{K}^{*}_l(\bs_{\zeta(j)_i})\right\}$, for $l = 1, \ldots, L$, and $i = 1, \ldots, D_j$ 
\State Retain $\mathbf{K}^{*}_{l,j}$ with probability $R = \min\{1, f_K(\mathbf{K}^{*}_{l,j}|\Psi^{(m)}, \mathbf{z})/f_K(\mathbf{K}^{(m)}_{l,j}|\Psi^{(m)}, \mathbf{z})\}$,  
\end{algorithmic}

where $f_K$ is the full conditional distribution for $\mathbf{K}^{*}_{l,j} = (K^{*}_l(\bs_{\zeta(j)_1}), \ldots, K^{*}_l(\bs_{\zeta(j)_{D_j}})^\top$, $l =1, \ldots, L$.

\section{POSTERIOR PREDICTIVE DRAWS}
\label{sec:posterior_predictive_draws}
In this section we describe how to make posterior predictive draws from the proposed model. We will make use of the following fact about multivariate normal distributions: let $\mathbf{W} = (\mathbf{W}_1^\top, \mathbf{W}_2^\top)^\top$, where $\mathbf{W}_1$ and $\mathbf{W}_2$ are $p$- and $q$-dimensional vectors, respectively. If $\mathbf{W}$ has multivariate normal distribution 

\[
\mathbf{W} \sim N\left(\begin{pmatrix}
\bmu_1 \\
\bmu_2
\end{pmatrix}, 
\begin{pmatrix}
\Sigma_{11} & \Sigma_{12} \\
\Sigma_{21} & \Sigma_{22} \\
\end{pmatrix}
\right),
\]
then $\mathbf{W}_1|\mathbf{W}_2 = \mathbf{w}_2 \sim N(\bmu_1 + \Sigma_{12}\Sigma^{-1}_{22}(\mathbf{w}_2 - \bmu_2), \Sigma_{11}- \Sigma_{12}\Sigma_{22}^{-1}\Sigma_{21})$. 

Now, let $m$ index the MCMC iteration, and let $\bs^*$ denote a prediction location. For each of $\bs_1,\ldots, \bs_D$, we first transform the basis functions to the Gaussian process scale by $\tilde{K}^{(m)}_l(\bs) = \log\left\{K^{(m)}_l(\bs)/K^{(m)}_L(\bs)\right\}, \, l = 1, \ldots, L$. Then, for $l = 1, \ldots, L-1$, we simulate the $l^{\mbox{\footnotesize th}}$ basis function $\tilde{K}^{(m)}_l(\bs^*)|\tilde{K}^{(m)}_l(\bs_1),\ldots, \tilde{K}^{(m)}_l(\bs_D), {\delta^{2}}^{(m)}, \rho^{(m)}$ from the corresponding conditional multivariate normal distribution. We set $\tilde{K}^{(m)}_L(\bs^*) = 1$, and normalize each basis function at the prediction location by $K_l^{(m)}(\bs^*) = \exp\{\tilde{K}_l^{(m)}(\bs^*)\}/\sum_{l = 1}^L \exp\{\tilde{K}_l^{(m)}(\bs^*)\}$, $l = 1, \ldots, L$. Then, we (conditionally) simulate GEV marginal parameters at prediction location from the corresponding multivariate normal distributions (e.g., $\mu^{(m)}(\bs^*)|\mu^{(m)}(\bs_1), \ldots, \mu^{(m)}(\bs_D), {\delta^{2}}_\mu^{(m)}, \rho_\mu^{(m)}$). Finally, we apply the following procedure to make draws of $\tilde{Z}^{(m)}_t(\bs^*)$ from the posterior predictive distribution, for each year $t$,

\begin{algorithmic}[1]
\State Set $Y^{(m)}_t(\bs^*)= \left\{\sum_{l=1}^L A^{(m)}_{l,t} \tilde{K}_l^{(m)}(\bs^*)^{1/\alpha}\right\}^{\alpha^{(m)}}$;
\State Draw $Z^{(m)}_t(\bs^*)|Y^{(m)}_t(\bs^*), \alpha^{(m)} \sim {\rm GEV}\{Y^{(m)}_t(\bs^*),\alpha^{(m)} Y^{(m)}_t(\bs^*) ,\alpha^{(m)}\}$;
\State Set $\tilde{Z}^{(m)}_t(\bs^*) = \mathrm{GEV}^{-1}\left[G_{\bs^*}\left\{Z^{(m)}_t(\bs^*)\right\};\mu^{(m)}(\bs^*), \sigma^{(m)}(\bs^*),\xi^{(m)}(\bs^*) \right]$,
\end{algorithmic}
where ${\rm GEV}^{-1}$ is the ${\rm GEV}$ quantile function, and $G_{\bs^*}$ denotes the (marginal) distribution function of $\tilde{Z}^{(m)}(\bs^*)$. 
% section posterior_predictive_draws (end)

\section{SIMULATION STUDY}
\label{sec:simulation_study}

\subsection{Simulation designs and results} % (fold)
\label{sub:designs_results}
To confirm that our MCMC algorithm produces reliable results, and to evaluate the algorithm's ability to infer the parameters of the process under different regimes, we conduct a simulation study for both the Gaussian density basis and the log-Gaussian process basis models. In all designs considered, we simulate $T = 30$ i.i.d.\ replicates (e.g., years) of the max-id process $\tilde{Z}(\bs)$ (defined in Section 2.2 of the main text) observed at $N = 100$ locations uniformly distributed on the unit square $[0,1] \times [0,1]$. In this section only, we replace the Gaussian process priors on the GEV parameters with simpler non-spatially varying priors, as the computational burden is already quite high without spatially varying margins, and instead we use $N(0,100)$ priors for $\mu, \gamma \equiv \log(\sigma),$ and $\xi$. For all simulations, data are generated using $\mu = 0, \, \sigma = 1,$ and $\xi = 0$ (i.e., with standard Gumbel margins). For the Gaussian density basis model, we consider $L = 25$ knots evenly distributed over the unit square using a standard deviation of $\tau = 1/6$, and for the log-Gaussian process basis model, we use $L = 15$ basis functions and take the variance and range parameters of the underlying Gaussian processes to be $\delta_K^2 = 25$ and $\rho_K = 3/4$. We vary the settings of the dependence parameters $\alpha$ and $\theta$ as described in Table \ref{tab:simset}. For each simulation design, models are fitted to $100$ datasets using MCMC (details are given in Appendix \ref{sec:mcmc_details}). The effective sample sizes (ESS) and effective samples per second (ES/sec) for each design are reported in Section \ref{sub:computation_time}.

\begin{table}[t!]
\begin{center}
%\resizebox{\textwidth}{!}{\begin{minipage}{0.75\textwidth}
\caption{Dependence parameter simulation settings used for the Gaussian density and log-Gaussian process models.}
\label{tab:simset}
\centering
\begin{tabular}{r|rrrrrr}
 Sim. \#    &   1 & 2 & 3 & 4 & 5 & 6 \tabularnewline
 \hline
$\alpha$&$ 0.1$&$ 0.25$&$ 0.1$&$ 0.25$&$ 0.1$&$ 0.25$\tabularnewline
$\theta$&$ 0$&$ 0$&$ 10^{-4}$&$ 10^{-4}$&$ 0.1$&$ 0.1$\tabularnewline
\end{tabular}
%\end{minipage}}
\end{center}
\end{table}

The coverage of Bayesian $95\%$ highest posterior density (HPD) credible intervals, bias, and root mean square error (RMSE) of the posterior mean estimates for the Gaussian density and log-Gaussian process basis models are reported in Tables \ref{tab:gden_sim_res} and \ref{tab:lgproc_sim_res}, respectively. The simulation results show nearly nominal coverages in all cases. The slight positive bias of $\delta_K^2$ and $\rho_K$ relative to their magnitudes, but nominal coverage of the 95\% credible intervals, suggests that these two parameters may not be completely identifiable in practice. However, since we are more concerned with making inference on the quantiles and general spatial patterns of extremes, this is not a major concern.

\begin{table}[t!]
\begin{center}
\caption{Bias, root mean squared error, and $95\%$ credible interval coverages for parameters of Gaussian density basis models.}
\label{tab:gden_sim_res}
\begin{tabular}{r|rrrrrr}
\toprule
\multicolumn{1}{r}{}&\multicolumn{1}{c}{$\alpha$}&\multicolumn{1}{c}{$\theta$}&\multicolumn{1}{c}{$\tau$}&\multicolumn{1}{c}{$\mu$}&\multicolumn{1}{c}{$\sigma$}&\multicolumn{1}{c}{$\xi$}\tabularnewline
\midrule
\midrule
{\bfseries Bias}&&&&&&\tabularnewline
Sim. 1&$  0.0005$&$     		$&$ -0.00006 $&$ -0.006   $&$ -0.0009  $&$   0.000003$\tabularnewline
Sim. 2&$  0.002   $&$   		$&$ -0.0003	 $&$ -0.003   $&$ -0.005   $&$   0.0002  $\tabularnewline
Sim. 3&$  0.0003  $&$   0.0001  $&$  0.0009  $&$ -0.005   $&$  0.001   $&$  -0.002   $\tabularnewline
Sim. 4&$ -0.003   $&$   0.00003 $&$  0.002   $&$  0.004   $&$  0.003   $&$  -0.0002  $\tabularnewline
Sim. 5&$  0.004   $&$   0.001   $&$ -0.002   $&$ -0.01   $&$  0.0005  $&$   0.003   $\tabularnewline
Sim. 6&$  0.01  $&$   0.02   $&$ -0.002   $&$ -0.001   $&$ -0.001   $&$   0.002   $\tabularnewline
\midrule
{\bfseries RMSE}&&&&&&\tabularnewline
Sim. 1&$  0.002   $&$           $&$  0.001   $&$  0.06    $&$  0.02    $&$   0.002   $\tabularnewline
Sim. 2&$  0.009   $&$           $&$  0.003   $&$  0.07    $&$  0.03    $&$   0.006   $\tabularnewline
Sim. 3&$  0.007   $&$   0.0003  $&$  0.006   $&$  0.04    $&$  0.02    $&$   0.01    $\tabularnewline
Sim. 4&$  0.02    $&$   0.00007 $&$  0.009   $&$  0.06    $&$  0.03    $&$   0.02    $\tabularnewline
Sim. 5&$  0.007   $&$   0.05    $&$  0.005   $&$  0.04    $&$  0.01    $&$   0.01    $\tabularnewline
Sim. 6&$  0.03    $&$   0.06    $&$  0.01    $&$  0.05    $&$  0.02    $&$   0.01    $\tabularnewline
\midrule
{\bfseries Coverage}&&&&&&\tabularnewline
Sim. 1&$  0.93$&$       $&$  0.97$&$  0.91$&$  0.96$&$   0.94$\tabularnewline
Sim. 2&$  0.97$&$       $&$  0.94$&$  0.96$&$  0.97$&$   0.94$\tabularnewline
Sim. 3&$  0.95$&$   0.93$&$  0.94$&$  0.94$&$  0.94$&$   0.96$\tabularnewline
Sim. 4&$  0.91$&$   0.96$&$  0.95$&$  0.94$&$  0.92$&$   0.91$\tabularnewline
Sim. 5&$  0.96$&$   0.89$&$  0.98$&$  0.97$&$  0.96$&$   0.94$\tabularnewline
Sim. 6&$  0.94$&$   0.94$&$  0.94$&$  0.93$&$  0.96$&$   0.96$\tabularnewline
\bottomrule
\end{tabular}\end{center}
\end{table}

\begin{table}[t!]
\begin{center}
\caption{Bias, root mean squared error, and $95\%$ credible interval coverages for parameters of log-Gaussian process basis models.}
\label{tab:lgproc_sim_res}
\begin{tabular}{r|rrrrrrr}
\toprule
\multicolumn{1}{r}{}&\multicolumn{1}{c}{$\alpha$}&\multicolumn{1}{c}{$\theta$}&\multicolumn{1}{c}{$\delta_K^2$}&\multicolumn{1}{c}{$\rho_K$}&\multicolumn{1}{c}{$\mu$}&\multicolumn{1}{c}{$\sigma$}&\multicolumn{1}{c}{$\xi$}\tabularnewline
\midrule
{\bfseries Bias}&&&&&&&\tabularnewline
Sim. 1&$ -0.00003  $&$            $&$  2.2      $&$  0.06     $&$  0.006    $&$ -0.0004   $&$  -0.0003   $\tabularnewline
Sim. 2&$  0.002    $&$            $&$  3.7      $&$  0.09     $&$ -0.0004   $&$ -0.002    $&$   0.0007   $\tabularnewline
Sim. 3&$  0.0009   $&$   0.0001   $&$  2.7      $&$  0.08     $&$ -0.002    $&$ -0.001    $&$  -0.001    $\tabularnewline
Sim. 4&$  0.002    $&$   0.00002  $&$  4.5      $&$  0.07     $&$  0.007    $&$  0.0004   $&$   0.002    $\tabularnewline
Sim. 5&$  0.001    $&$   0.005    $&$  2.7      $&$  0.06     $&$ -0.02     $&$ -0.0005   $&$   0.002    $\tabularnewline
Sim. 6&$  0.003    $&$   0.03     $&$  5.8      $&$  0.07     $&$ -0.005    $&$ -0.003    $&$   0.0006   $\tabularnewline
\midrule
{\bfseries RMSE}&&&&&&&\tabularnewline
Sim. 1&$  0.002$&$            $&$  4.5$&$  0.14	$&$  0.06 $&$  0.007$&$   0.004$\tabularnewline
Sim. 2&$  0.006$&$            $&$  5.8$&$  0.15	$&$  0.08 $&$  0.02 $&$   0.008$\tabularnewline
Sim. 3&$  0.004$&$   0.0003   $&$  5.1$&$  0.16	$&$  0.07 $&$  0.01 $&$   0.01$\tabularnewline
Sim. 4&$  0.009$&$   0.00007  $&$  7.2$&$  0.14	$&$  0.08 $&$  0.02 $&$   0.02$\tabularnewline
Sim. 5&$  0.005$&$   0.06     $&$  5.0$&$  0.14	$&$  0.07 $&$  0.02 $&$   0.01$\tabularnewline
Sim. 6&$  0.02$&$   0.06      $&$  8.7$&$  0.15	$&$  0.06 $&$  0.03 $&$   0.02$\tabularnewline
\midrule
{\bfseries Coverage}&&&&&&&\tabularnewline
Sim. 1&$  0.91$&$        $&$  0.95$&$  0.91$&$  0.93$&$  0.97$&$   0.91$\tabularnewline
Sim. 2&$  0.93$&$        $&$  0.93$&$  0.92$&$  0.95$&$  0.98$&$   0.96$\tabularnewline
Sim. 3&$  0.96$&$   0.94$&$  0.96$&$  0.89$&$  0.86$&$  0.94$&$   0.92$\tabularnewline
Sim. 4&$  0.95$&$   0.94$&$  0.97$&$  0.99$&$  0.93$&$  0.93$&$   0.95$\tabularnewline
Sim. 5&$  0.94$&$   0.93$&$  0.97$&$  0.94$&$  0.90$&$  0.94$&$   0.96$\tabularnewline
Sim. 6&$  0.93$&$   0.93$&$  0.92$&$  0.93$&$  0.95$&$  0.95$&$   0.95$\tabularnewline
\bottomrule
\end{tabular}\end{center}
\end{table}

\subsection{Computation time} % (fold)
\label{sub:computation_time}
The average effective sample sizes (ESS), and effective samples per second (ES/sec) for the simulations are reported in Tables \ref{tab:lgproc_comp_time} and \ref{tab:gden_comp_time}. In general, the efficiency of samplers for the log-Gaussian process basis models is better than for the Gaussian density basis models due to the smaller number of $A_{l,t}$ terms. Also, samplers for the max-stable ($\theta=0$) models are more efficient than their max-id counterparts due to the closed-form expression of the marginal quantile functions for $\tilde{Z}(\bs)$ in the max-stable case. For the max-id models, the marginal quantile functions for $\tilde{Z}(\bs)$ are obtained by numerical inversion of the distribution function.
\begin{table}[t!]
\begin{center}
\caption{The average effective sample sizes (ESS), and effective samples per second (ES/sec) for MCMC samplers of the log-Gaussian process basis simulation study.}
\label{tab:lgproc_comp_time}
\begin{tabular}{r|rrrrrrr}
\toprule
\multicolumn{1}{r}{}&\multicolumn{1}{c}{$\alpha$}&\multicolumn{1}{c}{$\theta$}&\multicolumn{1}{c}{$\delta^2_K$}&\multicolumn{1}{c}{$\rho_K$}&\multicolumn{1}{c}{$\mu$}&\multicolumn{1}{c}{$\sigma$}&\multicolumn{1}{c}{$\xi$}\tabularnewline
\midrule
{\bfseries ESS}&&&&&&&\tabularnewline
Sim. 1&$123$&$        $&$379$&$378$&$109$&$227$&$ 408$\tabularnewline
Sim. 2&$172$&$        $&$336$&$336$&$128$&$214$&$ 583$\tabularnewline
Sim. 3&$134$&$ 118$&$366$&$365$&$115$&$413$&$ 803$\tabularnewline
Sim. 4&$218$&$ 185$&$281$&$291$&$137$&$368$&$ 900$\tabularnewline
Sim. 5&$121$&$ 116$&$340$&$345$&$123$&$380$&$1080$\tabularnewline
Sim. 6&$136$&$ 128$&$243$&$262$&$187$&$434$&$1752$\tabularnewline
\midrule
{\bfseries ES/sec}&&&&&&&\tabularnewline
Sim. 1&$  0.004$&$        $&$  0.011$&$  0.011$&$  0.003$&$  0.007$&$   0.012$\tabularnewline
Sim. 2&$  0.005$&$        $&$  0.010$&$  0.010$&$  0.004$&$  0.007$&$   0.018$\tabularnewline
Sim. 3&$  0.002$&$   0.002$&$  0.005$&$  0.005$&$  0.002$&$  0.006$&$   0.011$\tabularnewline
Sim. 4&$  0.003$&$   0.003$&$  0.004$&$  0.005$&$  0.002$&$  0.006$&$   0.014$\tabularnewline
Sim. 5&$  0.002$&$   0.002$&$  0.005$&$  0.005$&$  0.002$&$  0.006$&$   0.017$\tabularnewline
Sim. 6&$  0.002$&$   0.002$&$  0.004$&$  0.004$&$  0.003$&$  0.007$&$   0.028$\tabularnewline
\bottomrule
\end{tabular}\end{center}
\end{table}

\begin{table}[t!]
\begin{center}
\caption{The average effective sample sizes (ESS), and effective samples per second (ES/sec) for MCMC samplers of the Gaussian density basis simulation study.}
\label{tab:gden_comp_time}
\begin{tabular}{r|rrrrrr}
\toprule
\multicolumn{1}{r}{}&\multicolumn{1}{c}{$\alpha$}&\multicolumn{1}{c}{$\theta$}&\multicolumn{1}{c}{$\tau$}&\multicolumn{1}{c}{$\mu$}&\multicolumn{1}{c}{$\sigma$}&\multicolumn{1}{c}{$\xi$}\tabularnewline
\midrule
{\bfseries ESS}&&&&&&\tabularnewline
Sim. 1&$111$&$    $&$138$&$111$&$135$&$1527$\tabularnewline
Sim. 2&$121$&$    $&$147$&$131$&$137$&$1562$\tabularnewline
Sim. 3&$128$&$ 114$&$127$&$145$&$387$&$1452$\tabularnewline
Sim. 4&$169$&$ 183$&$172$&$174$&$250$&$ 553$\tabularnewline
Sim. 5&$121$&$ 112$&$127$&$186$&$682$&$2859$\tabularnewline
Sim. 6&$128$&$ 119$&$138$&$446$&$661$&$3308$\tabularnewline
\midrule
{\bfseries ES/sec}&&&&&&\tabularnewline
Sim. 1&$  0.002$&$   0.166$&$  0.003$&$  0.002$&$  0.003$&$   0.032$\tabularnewline
Sim. 2&$  0.003$&$   0.177$&$  0.003$&$  0.003$&$  0.003$&$   0.034$\tabularnewline
Sim. 3&$  0.001$&$   0.001$&$  0.001$&$  0.001$&$  0.003$&$   0.013$\tabularnewline
Sim. 4&$  0.002$&$   0.002$&$  0.002$&$  0.002$&$  0.002$&$   0.005$\tabularnewline
Sim. 5&$  0.001$&$   0.001$&$  0.001$&$  0.002$&$  0.006$&$   0.026$\tabularnewline
Sim. 6&$  0.001$&$   0.001$&$  0.001$&$  0.004$&$  0.006$&$   0.032$\tabularnewline
\bottomrule
\end{tabular}\end{center}
\end{table}

\section{Additional Precipitation Analysis} % (fold)
\label{sec:additional_precipitation_analysis}
In Figure 6 of the main text, the QQ-plots of group-wise minima and maxima do not correspond perfectly. To assess the source of this slight discrepancy in the higher order dependence characteristics, we perform the same analysis of group-wise statistics at observation locations. The results (Figure \ref{fig:qq_obs}) show closer correspondence between the observed and model distributions of group-wise minima, mean, and maxima, which is expected. The consistency of the model based and empirical distributions of these group-wise statistics at both observation and holdout locations supports the models ability to capture the dependence in the annual maxima fields.

\begin{figure}[!htb]
  \begin{center}
		\includegraphics[width=\textwidth]{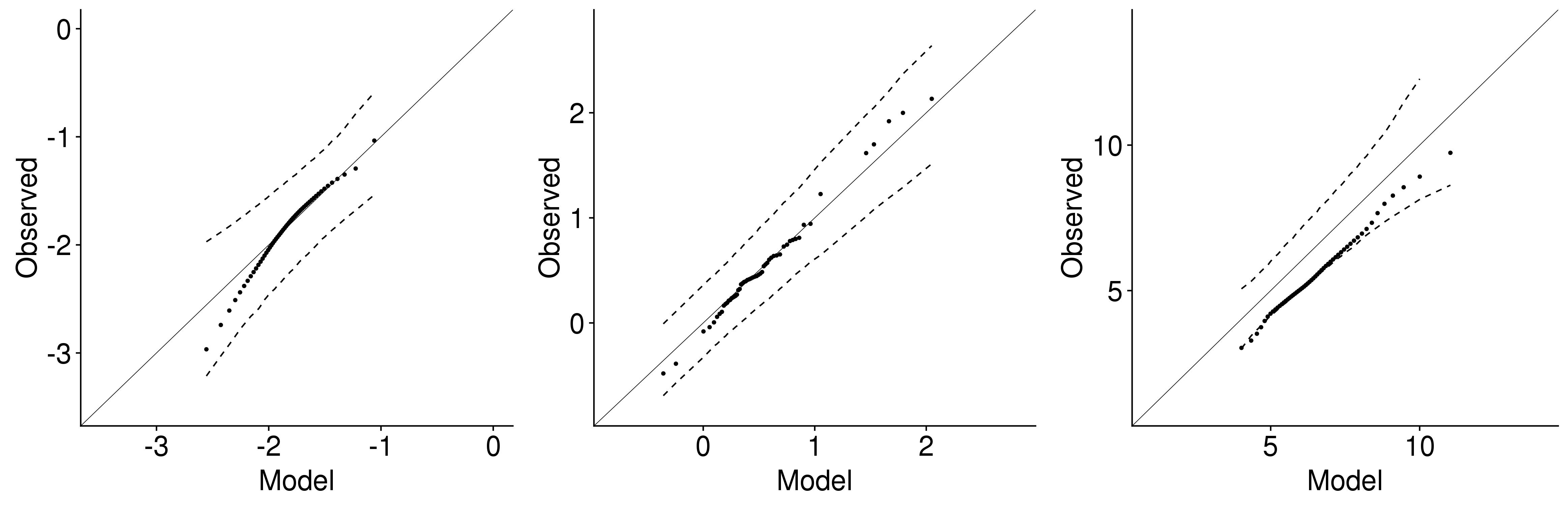}
		\caption{QQ-plots of the observed and predicted group-wise minima (left), mean (center), and maxima (right) taken over the annual maxima from all 546 observation stations. The dashed lines represent 95\% credible intervals. The plots reflect close correspondence between the empirical and modeled multivariate distributions. To account for the fact that the marginal GEV distributions vary across stations, observations are first transformed to unit Gumbel scale using the probability integral transform for the GEV marginal distributions at each station from the fitted model.}
		\label{fig:qq_obs}
	\end{center}
\end{figure}

Figure 5 of the main text shows only a slight discrepancy between the empirical and the max-id model-based estimates of $\chi_u$. For comparison, the log-Gaussian process basis, max-stable ($\theta = 0)$ model 95\% credible intervals are shown in Figure \ref{fig:lnms_precip_chi}. We see greater underestimation of dependence at short distances in the max-stable case than was observed in the max-id case, indicating that the source of the discrepancy diminished by the added flexibility in spatial dependence from the additional parameter $\theta > 0$. The discrepancy appears to be due to difficulty in accommodating the precipitation data's quite strong dependence at short lags that persists across quantiles and weak dependence at long lags that continues to quickly decay for increasing quantiles.

\begin{figure}[t!]
  \begin{center}
		\includegraphics[width=\textwidth]{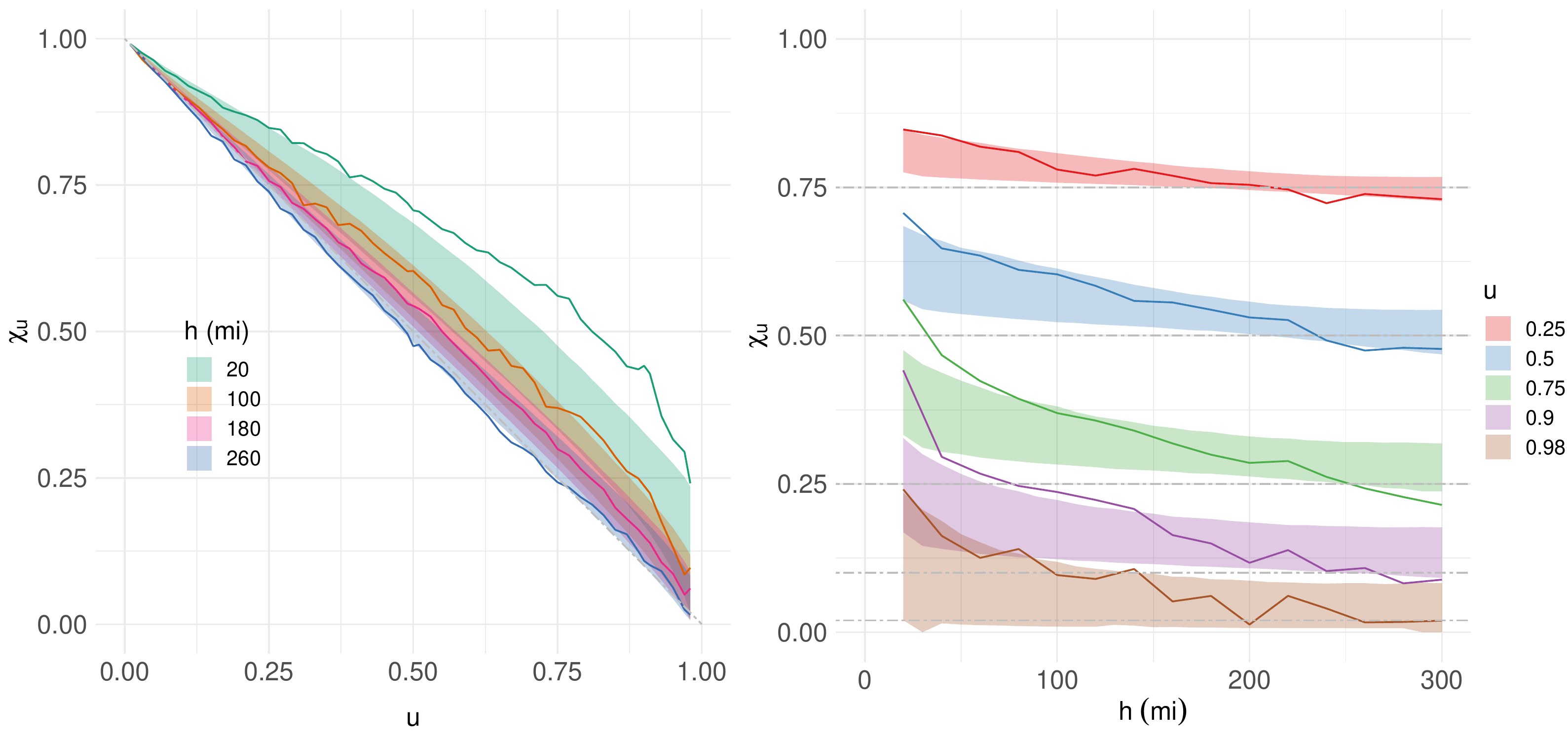}
		\caption{The left panel shows $\chi_u$ as a function of $u$ for fixed spatial lags $h = 20, 100, 180, 260$ miles calculated for the 100-holdout stations. Empirical estimates are shown as a solid black line, and max-stable, log-Gaussian Process basis model 95\% credible intervals are shown as gray ribbons. The decay of $\chi_u$ towards zero as $u \rightarrow 1$ suggests that daily precipitation are asymptotically independent. To understand the spatial dependence of extreme precipitation at increasingly extreme levels, empirical (solid lines) and model 95\% credible intervals (ribbons) of $\chi_u(h)$ for the holdout stations are plotted for several quantiles $u = 0.25, 0.5, 0.75, 0.9, 0.98$ (right panel). Horizontal dash-dot gray lines representing the values of $\chi_u$ under an everywhere-independent model are plotted for reference. The plot shows good overall agreement between the model fits and empirical estimates, except at very short distances.}
		\label{fig:lnms_precip_chi}
	\end{center}
\end{figure}

To assess the sensitivity of the posterior factor mean estimates to the choice of the number of basis functions $L$, here the top ranked factor means for other choices of $L$ are shown in analogous plots to those in Figure 9 of the main text. Several of the top ranked posterior factor means in the $L = 10$ and $L = 20$ log-Gaussian process basis function, $\theta > 0$ models agree with those identified in the $L = 15$ case. The top six ranked factor means for $L = 10$ and $L = 20$ are shown in Figures \ref{fig:L10_factors} and \ref{fig:L20_factors}. There are greater discrepancies between the $L = 10$ model and the others, which may be due to the rigidity imposed by using relatively few basis functions.

\begin{figure}[t!]
  \begin{center}
		\includegraphics[width=\textwidth]{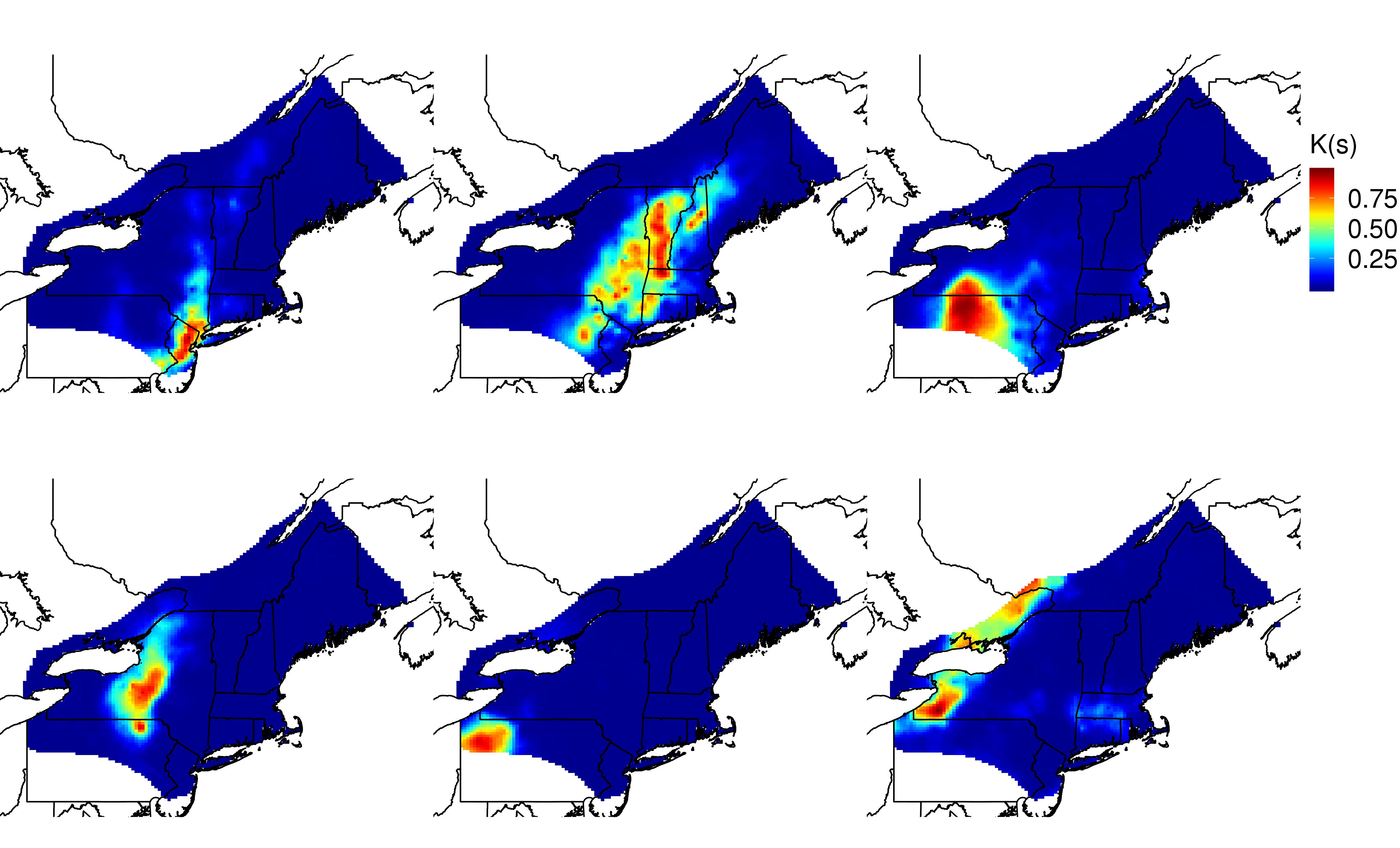}
		\caption{The top 6 ranked posterior factor means for the log-Gaussian process basis, $\theta > 0$ model with $L = 10$ basis functions. The (1) New-Jersey and (2) north-south New York-Vermont basis peaks are similar to those in the $L = 15$ model.}
		\label{fig:L10_factors}
	\end{center}
\end{figure}

\begin{figure}[t!]
  \begin{center}
		\includegraphics[width=\textwidth]{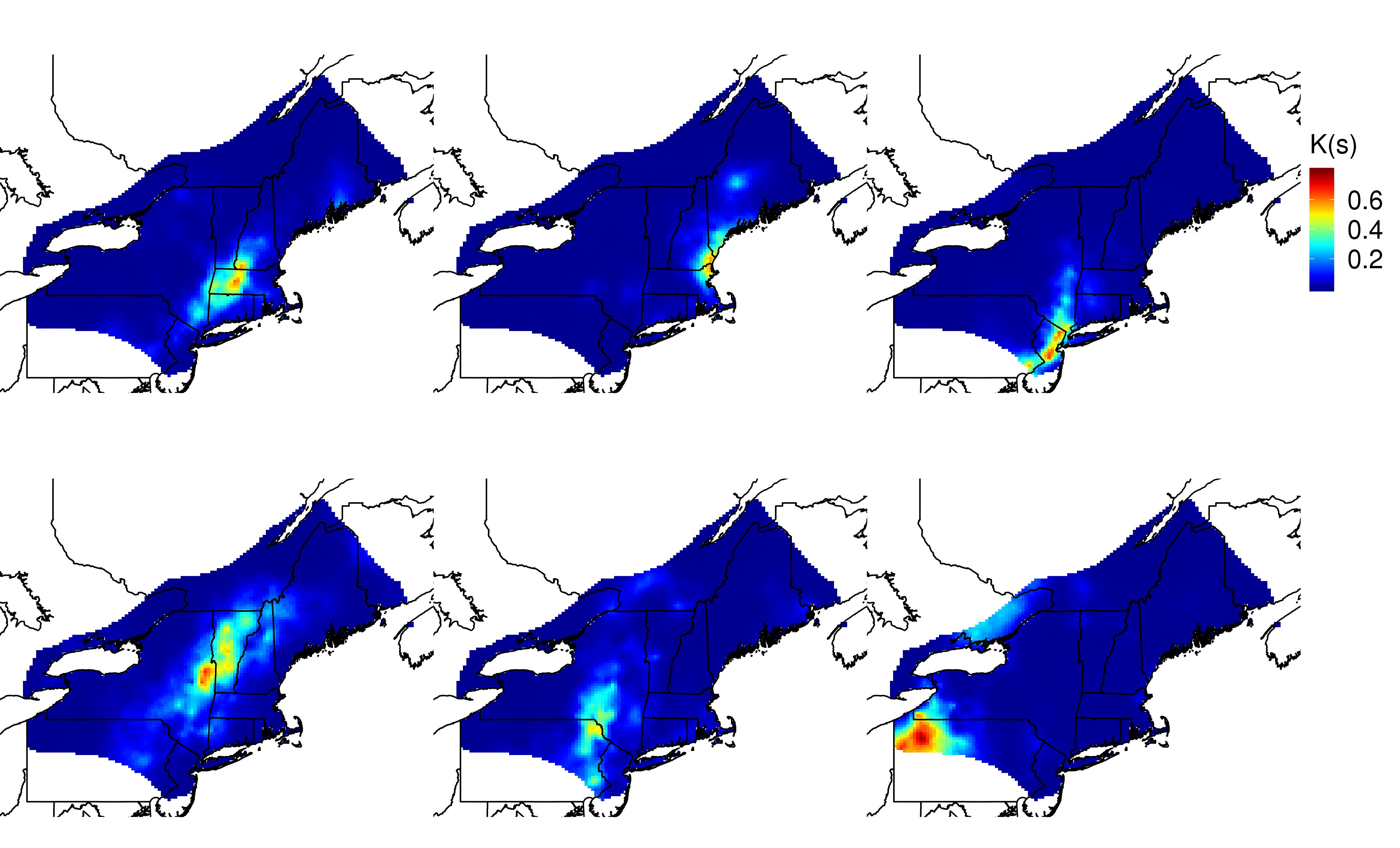}
		\caption{The top 6 ranked posterior factor means for the log-Gaussian process basis, $\theta > 0$ model with $L = 20$ basis functions. The (1) western Massachusetts, (2) eastern Massachusetts, (3) New Jersey, and (4) north-south New York-Vermont basis are similar to those in the $L = 15$ model.}
		\label{fig:L20_factors}
	\end{center}
\end{figure}
Next, we assess the appropriateness of the max-infinite divisibility assumption for the distribution of precipitation annual maxima. A distribution function $F$ on $\mathbb{R}^2$ is max-id if and only if for every rectangle $R = [a_0, a_1) \times [b_0, b_1)$,
\begin{equation}
\Delta \equiv F_{11}F_{00} - F_{01}F_{10} \geq 0,
\label{eq:bi_max_id_cond}
\end{equation}
where $F_{ij} = F(a_i,b_j)$ for $i,j\in \{0,1\}$ \citep{Balkema77}. We assess the empirical bivariate distributions of annual maximum precipitation for pairs of rain gauge stations. Consider a given pair of stations at locations $\bs_i$ and $\bs_j$, with empirical precipitation marginal distributions $\hat{F}^{(i)}$ and $\hat{F}^{(j)}$. We sample $M = 1,000$ rectangles as follows: we generate independent real numbers $\tilde{a}_0^m, \tilde{a}_1^m \iid \hat{F}^{(i)}$ and $\tilde{b}_0^m, \tilde{b}_1^m \iid \hat{F}^{(j)}$, with $m=1,\ldots,M$, and define the coordinates of the $m$-th rectangle as $(a_0^m, a_1^m)= (\min\{\tilde{a}_0^m, \tilde{a}_1^m\}, \max\{\tilde{a}_0^m, \tilde{a}_1^m\})$ and $(b_0^m, b_1^m)= (\min\{\tilde{b}_0^m, \tilde{b}_1^m\}, \max\{\tilde{b}_0^m, \tilde{b}_1^m\})$. For rectangle $R_m=[a_0^m, a_1^m) \times [b_0^m, b_1^m)$, we set $\Delta_m(\bs_i, \bs_j)$ equal to the difference from Equation \ref{eq:bi_max_id_cond}, and consider $\Delta_{\min}(\bs_i, \bs_j) = \min_{m = 1}^M \Delta_m(\bs_i, \bs_j)$ and $p_{\Delta>0}(\bs_i, \bs_j) = {1\over M}\sum_{m = 1}^M I(\Delta_m(\bs_i, \bs_j)>0)$, where $I(\cdot)$ is the indicator function. We repeat this experiment $N = 300$ times. Densities of the proportion $p_{\Delta>0}$ of rectangles for which $\Delta_m > 0$ and densities of the lower $5\%$ quantile of minimum differences $\Delta_{\min}$ are summarized in Figure \ref{fig:max_id_summary}  by distance between pairs of stations up to 100 miles apart.

For comparison, we repeat the experiment by sampling from the bivariate normal distribution, and bivariate extreme value distribution with logistic dependence structure. We draw samples of size $n = 55$ and calculate $M = 1,000$ rectangles for each of $N = 300$ Monte Carlo replicates of the experiment for varying dependence parameters. For the bivariate normal case, we take standard normal marginal distributions and varying the correlation parameter $\tilde{\rho}$ between $-0.9$ and $0.9$. For the bivariate extreme value distribution, we take unit Fr\'{e}chet marginal distributions, with varying logistic dependence parameter $\tilde{\alpha}$ between $0.05$ and $1$. All of the bivariate extreme value distributions considered are max-id. The bivariate normal distributions with $\tilde{\rho} < 0$ are not max-id, and it is not clear when $\tilde{\rho}>0$. The results are reported in Tables~\ref{tab:bivar_evd} (extreme-value dependence structure) and \ref{tab:bivar_gauss} (Gaussian dependence structure). The pairwise empirical distributions of precipitation annual maxima appear to be consistent with the max-id assumption, as Equation \ref{eq:bi_max_id_cond} is satisfied for the majority of rectangles across all pairwise station distances.

\begin{figure}[t!]
    \centering
    \begin{subfigure}[b]{0.45\textwidth}
        \includegraphics[width=\textwidth]{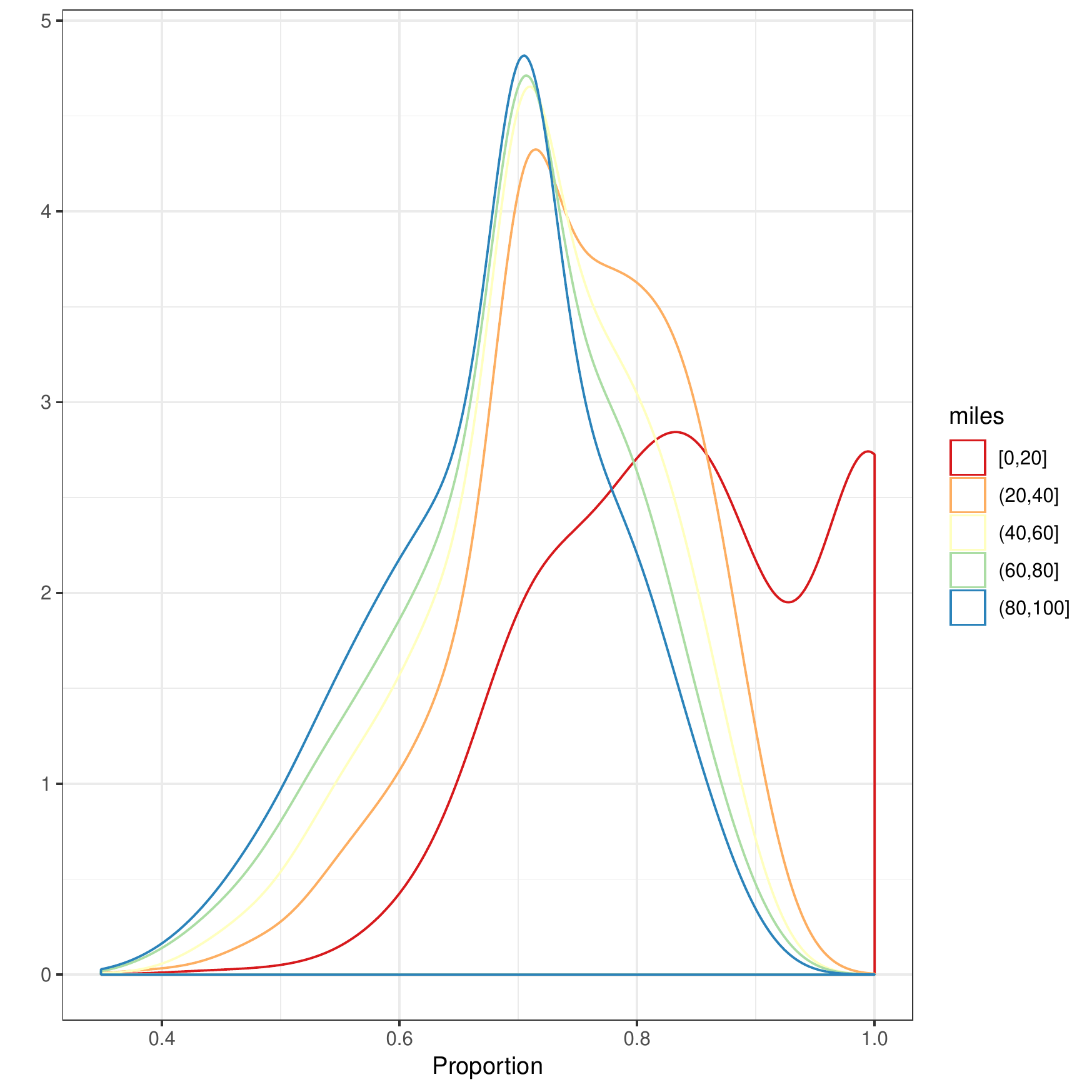}
    \end{subfigure}
    ~ 
    \begin{subfigure}[b]{0.45\textwidth}
        \includegraphics[width=\textwidth]{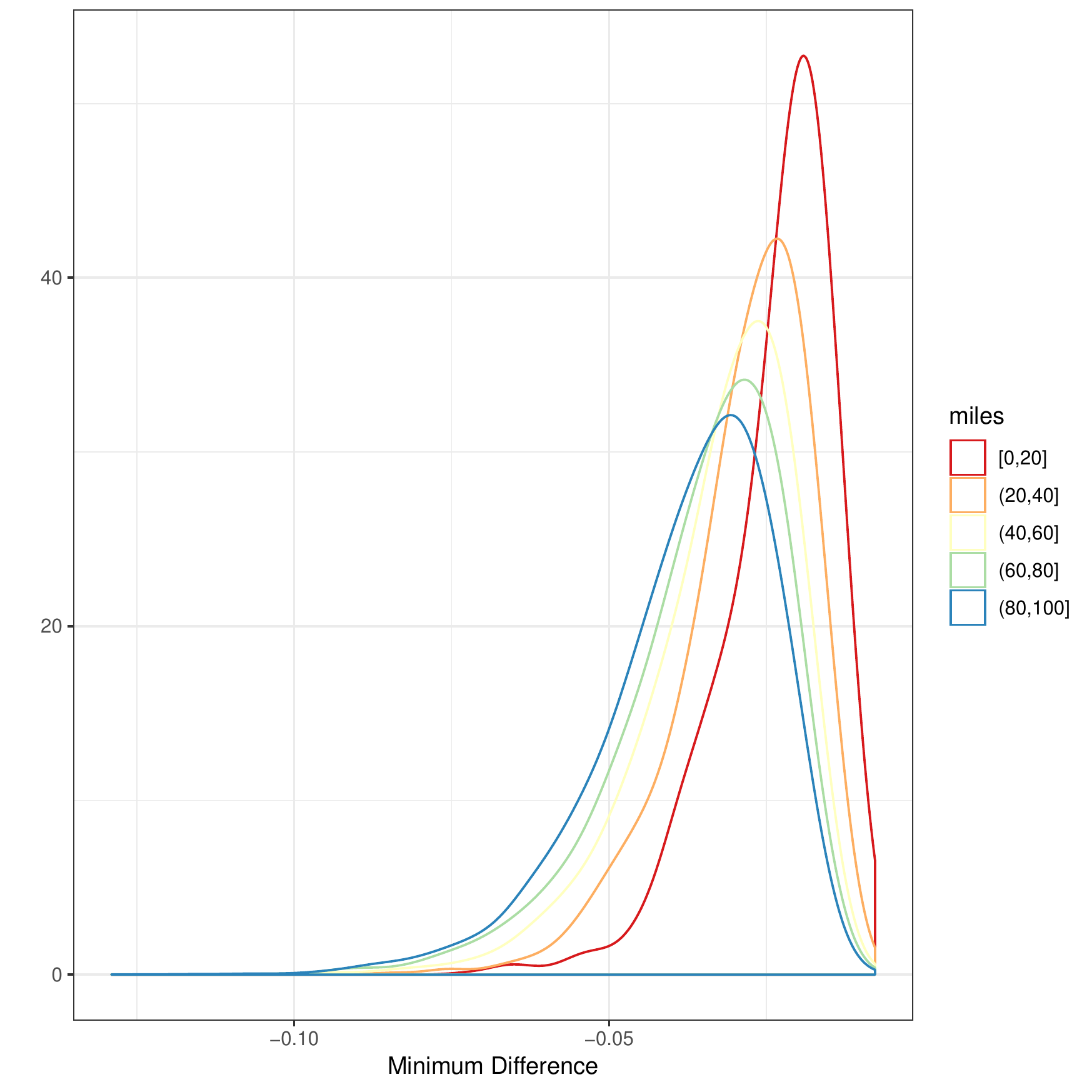}
    \end{subfigure}
    \caption{Densities of the proportion of rectangles $p_{\Delta>0}$ of rectangles for which $\Delta_m > 0$ (left) and densities of the lower $5\%$ quantile of minimum differences $\Delta_{\min}$ (right).}
    \label{fig:max_id_summary}
\end{figure}

\begin{table}[t!]
\centering
\caption{Summary of dependence for bivariate extreme value distribution with logistic dependence structure and dependence parameter $\tilde{\alpha}$. Monte Carlo estimates of the $2.5\%$ and $97.5\%$ quantiles of the proportion of rectangles for which $\Delta >0$, as well as the $5\%$ quantile of $\Delta_{\min}$ over $M = 1000$ rectangles, each column based on $N = 300$ Monte Carlo samples.}
\label{tab:bivar_evd}
\resizebox{\textwidth}{!}{
\begin{tabular}{r|r|rrrrrrrrrrrrrrrrrrrr}
  \hline
&$\tilde{\alpha}$ & 0.05 & 0.1 & 0.15 & 0.2 & 0.25 & 0.3 & 0.35 & 0.4 & 0.45 & 0.5 & 0.55 & 0.6 & 0.65 & 0.7 & 0.75 & 0.8 & 0.85 & 0.9 & 0.95 & 1 \\   \hline
Prop. $\Delta > 0$ &2.5\% & 0.99 & 0.98 & 0.96 & 0.95 & 0.91 & 0.88 & 0.84 & 0.80 & 0.72 & 0.72 & 0.69 & 0.62 & 0.55 & 0.53 & 0.48 & 0.47 & 0.45 & 0.43 & 0.39 & 0.40 \\ 
&  97.5\% & 1.00 & 1.00 & 1.00 & 0.99 & 0.99 & 0.98 & 0.98 & 0.96 & 0.96 & 0.94 & 0.93 & 0.91 & 0.90 & 0.89 & 0.87 & 0.86 & 0.84 & 0.82 & 0.77 & 0.76 \\ 
   \hline
  $\Delta_{\min}$& 5.0\% & -0.00 & -0.00 & -0.01 & -0.01 & -0.01 & -0.01 & -0.01 & -0.02 & -0.02 & -0.02 & -0.03 & -0.03 & -0.03 & -0.04 & -0.04 & -0.05 & -0.05 & -0.06 & -0.07 & -0.07 \\ 
   \hline
\end{tabular}
}
\end{table}

\begin{table}[t!]
\centering
\caption{Summary of dependence for bivariate normal with correlation $\tilde{\rho}$. Monte Carlo estimates of the $2.5\%$ and $97.5\%$ quantiles of the proportion of rectangles for which $\Delta >0$, as well as the $5\%$ quantile of $\Delta_{\min}$ over $M = 1000$ rectangles, each column based on $N = 300$ Monte Carlo samples.}
\label{tab:bivar_gauss}
\resizebox{\textwidth}{!}{
\begin{tabular}{r|r|rrrrrrrrrrrrrrrrrrr}
  \hline
 &$\tilde{\rho}$ & 0.9 & 0.8 & 0.7 & 0.6 & 0.5 & 0.4 & 0.3 & 0.2 & 0.1 & 0 & -0.1 & -0.2 & -0.3 & -0.4 & -0.5 & -0.6 & -0.7 & -0.8 & -0.9 \\ 
  \hline
 Prop. $\Delta > 0$ &2.5\% & 0.92 & 0.85 & 0.78 & 0.70 & 0.63 & 0.57 & 0.50 & 0.44 & 0.42 & 0.37 & 0.38 & 0.35 & 0.35 & 0.35 & 0.36 & 0.37 & 0.38 & 0.41 & 0.48 \\ 
  &97.5\% & 0.99 & 0.97 & 0.95 & 0.94 & 0.91 & 0.90 & 0.87 & 0.83 & 0.81 & 0.79 & 0.73 & 0.71 & 0.66 & 0.61 & 0.58 & 0.55 & 0.54 & 0.54 & 0.58 \\ 
  \hline
$\Delta_{\min}$&5.0\% & -0.01 & -0.02 & -0.02 & -0.03 & -0.03 & -0.03 & -0.04 & -0.05 & -0.07 & -0.07 & -0.07 & -0.10 & -0.10 & -0.11 & -0.12 & -0.14 & -0.15 & -0.17 & -0.19 \\ 
   \hline
\end{tabular}
}
\end{table}

\bibliographystyle{chicago}
\bibliography{bib}